\documentclass{article}
\usepackage[final]{neurips_2026}
\usepackage[utf8]{inputenc}
\usepackage[T1]{fontenc}
\usepackage{hyperref}
\usepackage{url}
\usepackage{booktabs}
\usepackage{subcaption}
\usepackage{amsfonts}
\usepackage{amsmath}
\usepackage{amssymb}
\usepackage{amsthm}
\usepackage{nicefrac}
\usepackage{placeins}
\usepackage{float}

\newtheorem{proposition}{Proposition}
\newtheorem{remark}{Remark}

\usepackage{microtype}
\DisableLigatures{encoding = *, family = *}
\microtypesetup{expansion=false}

\usepackage{xcolor}
\usepackage{xspace}
\usepackage{graphicx}
\usepackage{wrapfig}
\usepackage{enumitem}
\usepackage{multirow}
\usepackage[most]{tcolorbox}

\graphicspath{{figures/}}

\newcommand{\dataMdSync}{2026-05-05}

\IfFileExists{auto/v34d_macros.tex}{\newcommand{\diffSrinVd}{191.7\xspace}
\newcommand{\diffSrinVdCI}{[-35.0, 393.3]\xspace}
\newcommand{\diffSrinVdNPairs}{6\xspace}
\newcommand{\pSrinVd}{0.1135\xspace}
\newcommand{\diffSandhuVd}{74.4\xspace}
\newcommand{\diffSandhuVdCI}{[-36.7, 195.6]\xspace}
\newcommand{\diffSandhuVdNPairs}{9\xspace}
\newcommand{\pSandhuVd}{0.203\xspace}
\newcommand{\diffGongVd}{153.8\xspace}
\newcommand{\diffGongVdCI}{[28.7, 297.5]\xspace}
\newcommand{\diffGongVdNPairs}{8\xspace}
\newcommand{\pGongVd}{0.03\xspace}
\newcommand{\diffCSADVd}{-42.8\xspace}
\newcommand{\diffCSADVdCI}{[-211.6, 106.9]\xspace}
\newcommand{\diffCSADVdNPairs}{9\xspace}
\newcommand{\pCSADVd}{0.624\xspace}

\newcommand{\leadKappaBarVd}{178\xspace}
\newcommand{\leadKappaBarVdCI}{[71, 407]\xspace}

\newcommand{\leadKappaBarLoose}{272\xspace}
\newcommand{\recallKappaBarLoose}{0.52\xspace}
\newcommand{\farKappaBarLoose}{0.76\xspace}

}{%
  \newcommand{\diffSrinVd}{191.7\xspace}
  \newcommand{\diffSrinVdCI}{[-35.0,\,393.3]\xspace}
  \newcommand{\diffSrinVdNPairs}{6\xspace}
  \newcommand{\pSrinVd}{0.1135\xspace}
  \newcommand{\diffSandhuVd}{74.4\xspace}
  \newcommand{\diffSandhuVdCI}{[-36.7,\,195.6]\xspace}
  \newcommand{\diffSandhuVdNPairs}{9\xspace}
  \newcommand{\pSandhuVd}{0.203\xspace}
  \newcommand{\diffGongVd}{153.8\xspace}
  \newcommand{\diffGongVdCI}{[28.7,\,297.5]\xspace}
  \newcommand{\diffGongVdNPairs}{8\xspace}
  \newcommand{\pGongVd}{0.03\xspace}
  \newcommand{\diffCSADVd}{-42.8\xspace}
  \newcommand{\diffCSADVdCI}{[-211.6,\,106.9]\xspace}
  \newcommand{\diffCSADVdNPairs}{9\xspace}
  \newcommand{\pCSADVd}{0.624\xspace}
  \newcommand{\leadKappaBarVd}{178\xspace}
  \newcommand{\leadKappaBarVdCI}{[71,\,407]\xspace}

  \newcommand{\leadKappaBarLoose}{272\xspace}
  \newcommand{\recallKappaBarLoose}{0.52\xspace}
  \newcommand{\farKappaBarLoose}{0.76\xspace}
}
\providecommand{\nDegTessera}{400}
\providecommand{\nTotalTessera}{400}

\IfFileExists{auto/beta_minus_macros.tex}{\providecommand{\betaMinusRecall}{0.65\xspace}
\providecommand{\betaMinusLead}{318\xspace}
\providecommand{\betaMinusLeadCI}{[272,\,344]\xspace}

\providecommand{\betaMinusTauNeg}{-0.4\xspace}

\providecommand{\betaMinusNSub}{160\xspace}
\providecommand{\betaMinusFAR}{0.81\xspace}
\providecommand{\betaMinusPrecision}{0.55\xspace}

}{%
  \providecommand{\betaMinusPrecision}{0.55\xspace}%
  \providecommand{\betaMinusRecall}{0.65\xspace}%
  \providecommand{\betaMinusFAR}{0.81\xspace}%
  \providecommand{\betaMinusLead}{318\xspace}%
  \providecommand{\betaMinusLeadCI}{[272,\,344]\xspace}%
  \providecommand{\betaMinusTauNeg}{-0.4\xspace}%
  \providecommand{\betaMinusNSub}{160\xspace}%
}
\IfFileExists{auto/beta_minus_extras.tex}{\providecommand{\betaMinusAUROC}{0.80\xspace}

}{%
  \providecommand{\betaMinusAUROC}{n/a\xspace}%
}
\IfFileExists{auto/cck_augmented_macros.tex}{\providecommand{\gammaThreeMedian}{-0.0072\xspace}
\providecommand{\gammaThreeCI}{[-0.00769,\,-0.00602]\xspace}
\providecommand{\gammaThreeNSeeds}{240\xspace}

\providecommand{\betaTwoBaseMedian}{-1.15\xspace}
\providecommand{\betaTwoAugMedian}{-1.24\xspace}
\providecommand{\betaTwoAbsShrinkPct}{8\xspace}
}{%
  \providecommand{\gammaThreeMedian}{-0.0072\xspace}%
  \providecommand{\gammaThreeCI}{[-0.00769,\,-0.00602]\xspace}%
  \providecommand{\gammaThreeNSeeds}{240\xspace}%
  \providecommand{\betaTwoBaseMedian}{-1.15\xspace}%
  \providecommand{\betaTwoAugMedian}{-1.24\xspace}%
  \providecommand{\betaTwoAbsShrinkPct}{8\xspace}%
}
\IfFileExists{auto/scheffer_macros.tex}{

}{%
}

\providecommand{\VicsekAUROC}{0.99}%
\providecommand{\VicsekAUROClo}{0.98}%
\providecommand{\VicsekAUROChi}{1.00}%
\IfFileExists{auto/vicsek_rescore_macros.tex}{\renewcommand{\VicsekAUROC}{0.99}
\renewcommand{\VicsekAUROClo}{0.98}
\renewcommand{\VicsekAUROChi}{1.00}

}{}

\IfFileExists{auto/stylized_facts_macros.tex}{\providecommand{\styFactAlpha}{5.75}

\providecommand{\styFactBeta}{0.27}

}{%
  \providecommand{\styFactAlpha}{TBD\xspace}%
  \providecommand{\styFactBeta}{TBD\xspace}%
}

\providecommand{\leadKappaBarLoose}{272\xspace}
\providecommand{\recallKappaBarLoose}{0.52\xspace}
\providecommand{\farKappaBarLoose}{0.76\xspace}

\IfFileExists{auto/agent_vs_price_lag_macros.tex}{\providecommand{\agentLeadPriceSteps}{40\xspace}
\providecommand{\agentLeadPriceCI}{[18,\,68]\xspace}

}{%
  \providecommand{\agentLeadPriceSteps}{40\xspace}%
  \providecommand{\agentLeadPriceCI}{[18,\,68]\xspace}%
}

\newcommand{\seedsCWS}{80}
\newcommand{\kappaLevels}{5}
\newcommand{\nTrajectories}{400}
\newcommand{\nSupercritical}{240}
\newcommand{\nSubcritical}{160}

\newcommand{\windowTw}{100}
\newcommand{\snapEvery}{10}
\newcommand{\thresholdW}{0.5}
\newcommand{\alphaLazy}{0.5}
\newcommand{\fsqK}{64}
\newcommand{\fsqLevels}{4}
\newcommand{\thetaGeom}{0.30}
\newcommand{\thetaEvent}{0.50}

\title{GeomHerd: A Forward-looking Herding Quantification via Ricci Flow
Geometry on Agent Interactive Simulations}

\author{%
  Lake Yang \\
  \texttt{l.yang1@imperial.ac.uk} \\
  \And
  Junwei Su \\
  University of Science and Technology of China \\
  \texttt{junweisu.cs@gmail.com} \\
  \And
  Jingfeng Zeng \\
  MaxQuant \\
  \texttt{jeffery@maxquant.ai} \\
  \And
  Wenhao Lu \\
  The University of Hong Kong \\
  \texttt{whlu@connect.hku.hk} \\
  \And
  Xingzhi Qian \\
  University College London \\
  \texttt{xingzhi.qian.23@ucl.ac.uk} \\
  \And
  Weitong Zhang \\
  MaxQuant \\
  \texttt{weitong.zhang20@imperial.ac.uk} \\
  \And
  Chuan Wu \\
  The University of Hong Kong \\
  \texttt{cwu@cs.hku.hk} \\
  \And
  Dunhong Jin \\
  The University of Hong Kong \\
  \texttt{dhjin@hku.hk} \\
}

\makeatletter
\providecommand{\@trackname}{}
\makeatother

\begin{document}

\maketitle
\providecommand{\leadKappaBarLoose}{272\xspace}
\providecommand{\betaMinusLead}{318\xspace}
\providecommand{\agentLeadPriceSteps}{40\xspace}
\begin{abstract}
Herding\textemdash where agents align their behaviors and act collectively\textemdash is a central driver of market fragility and systemic risk. Existing approaches to quantify herding rely on price-correlation statistics, which inherently lag because they only detect coordination after it has already moved realised returns. We propose GeomHerd, a forward-looking geometric framework that bypasses this observability lag by quantifying coordination directly on upstream agent-interaction graphs. To generate these graphs, we treat a heterogeneous LLM-driven multi-agent simulator\textemdash each financial trader instantiated by a persona-conditioned LLM call\textemdash as a forecastable world, and evaluate the geometric pipeline on the Cividino--Sornette continuous-spin agent-based substrate as our headline financial testbed. By tracking the discrete Ollivier--Ricci curvature of these action graphs, GeomHerd captures the structural topology of emerging coordination. Theoretically, we establish a mean-field bridge mapping our graph-theoretic metric to CSAD, the classical macroscopic herding statistic, linking GeomHerd to downstream price-dispersion measurement. Empirically, GeomHerd anticipates herding long before aggregate market baselines: on the continuous-spin substrate, our primary detector fires a median of \leadKappaBarLoose{} steps before order-parameter onset; a contagion detector ($\beta_{-}$) recalls 65\% of critical trajectories \betaMinusLead{} steps early; and on co-firing trajectories the agent-graph signal precedes price-correlation-graph baselines by \agentLeadPriceSteps{} steps. As a complementary indicator, the effective vocabulary of agent actions contracts during cascades. The geometric signature transfers out-of-domain to the Vicsek self-driven-particle model, and a curvature-conditioned forecasting head reduces cascade-window log-return MAE over detector-conditioned and price-only baselines.
\end{abstract}

\section{Introduction}
\label{sec:intro}

Herding\textemdash where market participants collectively align their actions rather than relying on independent information\textemdash is a central mechanism behind market fragility, contagion, and tail-risk events~\citep{BikhchandaniHirshleiferWelch1992,BikhchandaniSharma2001}. Classical finance literature explains herding through \emph{informational cascades}, where financial agents rationally imitate predecessors~\citep{BikhchandaniHirshleiferWelch1992,banerjee1992simple,AveryZemsky1998}, and \emph{reputational herding}, where professional investors align with peers to avoid the career risk of being wrong alone~\citep{ScharfsteinStein1990}. From a systems perspective, herding can be viewed as a multi-agent phase transition in which behavioral diversity collapses into highly coordinated collective dynamics. Detecting such coordination early is therefore fundamental for risk monitoring and systemic-stability analysis.

Despite extensive study, existing herding diagnostics remain largely \emph{downstream}\textemdash that is, they read coordination off observables (realised returns or disclosed positions) that only become available \emph{after} herding has already moved the market. The classical literature primarily relies on two families of such signals. The first consists of \emph{return-based measures}, including cross-sectional dispersion statistics such as CSSD~\citep{ChristieHuang1995}, CSAD by CCK~\citep{ChangChengKhorana2000}, and state-space formulations~\citep{HwangSalmon2004}. The second consists of \emph{trading-flow measures}, most notably the LSV institutional buy/sell imbalance statistic~\citep{LSV1992} and sequential trading-correlation measures~\citep{Sias2004}. However, both families are fundamentally post-hoc: they detect herding only after coordinated actions have already propagated into realized returns or disclosed positions. Recently, geometric approaches based on discrete Ricci curvature~\citep{SandhuRicciFlow2016,SamalRicci2021,WangZhao2023,SanchezGarcia2024,Srinivasan2026,Akguller2025} have been proposed for market-stress analysis, but these methods still operate on \emph{price-correlation graphs}, which inherit the same observation-layer bottleneck.

At the same time, a large body of financial-network literature shows that systemic fragility is fundamentally shaped by the topology of interactions among agents~\citep{AllenGale2000,AcemogluOzdaglar2015,ElliottGolubJackson2014,BrunnermeierPedersen2009}. This suggests a natural question: instead of forecasting herding only after it manifests in returns or disclosed flows, can one directly quantify the evolving geometry of the \emph{agent coordination process itself}? Turning this upstream structural information into a practical forward-looking signal remains largely unexplored.

\paragraph{Our approach.}
We propose \emph{GeomHerd}, a forward-looking geometric framework
built on the \emph{agent interaction graph}. The intuition: herding
is \emph{geometric collapse}\textemdash as agents imitate one another, their
neighborhoods on the interaction graph become progressively similar,
overlapping, and tightly connected. We treat a heterogeneous LLM-driven
multi-agent financial simulator~\citep{TwinMarket2025,FCLAgent2025,MassPortfolio2025,FinCon2024}\textemdash in which each financial trader is
instantiated by a separate persona-conditioned LLM call, so one LLM
agent simulates one financial agent (one node $i\in V$) and the action
stream of every agent is fully observable at every step\textemdash as a
\emph{forecastable world}, and on top of it construct a dynamic agent
graph whose nodes are agents and whose edges encode recent behavioral
agreement. Each agent is modelled with distinct system-prompted personas (varying risk appetites, momentum horizons, and herding tendencies), and only population-level behavioral sweeps are controlled.
Discrete Ricci geometry\textemdash an edge-level, signed notion of curvature on a graph, recently used in geometric deep learning for diagnosing over-squashing and guiding GNN graph rewiring\textemdash here serves as a herding detector on this upstream (agent) layer, yielding a complementary
signal pair: positive Ollivier-Ricci curvature
($\bar\kappa_{OR}^{+}$) captures within-clique coordination, and strongly
negative curvature ($\beta_{-}$) identifies bridge-like edges along
which contagion propagates. We further evolve each snapshot under
discrete Ricci flow and record the first neckpinch time
$\tau_{\text{sing}}$ as a forward-looking proximity-to-collapse
descriptor, and add an information-theoretic signal $V_{\text{eff}}$
measuring the effective diversity of agents' action language.
Together $(\bar\kappa_{OR}^{+},\,\beta_{-},\,\tau_{\text{sing}},\,V_{\text{eff}})$
captures topological and behavioral collapse earlier than any
individual statistic. Importantly, GeomHerd remains tied to classical instruments: a
mean-field bridge to CSAD (Proposition~\ref{prop:bridge}) and empirical
alignment with, but temporal lead over, LSV.

\paragraph{Empirical study and findings.}
Our headline financial testbed is the Cividino-Sornette continuous-spin (CWS) agent-based model~\citep{CividinoWestphalSornette2023}, with the LLM-agent simulator built on Bedrock Claude Opus~4.6 instantiating the persona-conditioned setup of \S\ref{sec:method}, and out-of-domain transfer evaluated on the Vicsek self-driven-particle model~\citep{Vicsek1995} (a canonical physics model of flocking). In brief: (i)~GeomHerd leads
aggregate market events and price-graph or flow baselines on CWS, with
$(\bar\kappa_{OR}^{+},\beta_{-})$ splitting precision vs.\ contagion
recall; (ii)~multi-metric benchmarks plus augmented CCK and LSV-track
consistency checks anchor the signal to the classical literature;
(iii)~effective vocabulary contracts during cascades and curvature
transfers out-of-domain to Vicsek, indicating behavioural homogenisation and a
substrate-robust coordination signature; and (iv)~the curvature triplet $(\bar\kappa_{OR},\tau_{\text{sing}},V_{\text{eff}})$ conditions a Kronos-style discrete forecasting head whose cascade-window log-return MAE improves over detector-conditioned and price-only baselines.
Code and configurations to reproduce all experiments are available at the anonymous link.

\section{Method}
\label{sec:method}

We model interactions among $N$ agents at each simulator step $t$
as a weighted graph $G_t = (V, E_t, w_t)$ with $|V| = N$ and
self-loops $w_t(i, i) = 0$. The pipeline has four stages:
graph construction (\S\ref{sec:graph-construction}), edge
curvature (\S\ref{sec:curvature-and-signs}), detection
(\S\ref{sec:detection}), and two complementary scalars from the
same geometry-the forward-looking flow descriptor
$\tau_{\text{sing}}$ and the information-theoretic $V_{\text{eff}}$
(\S\ref{sec:other-signals}). A theoretical bridge to the classical
CSAD return-dispersion statistic
(\S\ref{sec:csad-bridge}) anchors the geometric signal to the
finance literature; assumptions, the proof sketch, identification
caveats, and the relation to the LSV trading-flow track are in
Appendix~\ref{app:proof}.

\subsection{Agent graph construction}
\label{sec:graph-construction}

The graph $G_t$ has five design axes summarised in
Table~\ref{tab:design-axes} (Appendix~\ref{app:design-axes}); all five are ablated in
\S\ref{sec:ablation}. The choices are governed by a three-layer
logic: alignment with the agent-based-model (ABM) substrate (which exposes discrete
actions), with the phenomenological definition of herding as
synchronised action-taking (Appendix~\ref{sec:related-finance}), and with
the Ricci-geometry interpretation. The two consequential choices
are \emph{nodes-as-individual-agents} (collapsing into persona
super-nodes destroys within-cluster dynamics; using assets as nodes
recreates the price-correlation graph
of Sandhu et al.~\cite{SandhuRicciFlow2016} and
Srinivasan~\cite{Srinivasan2026}) and
\emph{binary-windowed-agreement edges} (a cosine-similarity variant
removes the herding-side signal;
Appendix~\ref{app:cosine-ablation}). Let $a_i(t) \in \mathcal{A}$ denote agent $i$'s discrete action
at simulator step $t$, drawn from a finite alphabet $\mathcal{A}$
(e.g., $\mathcal{A} = \{\text{buy, hold, sell}\}$ on CWS,
$|\mathcal{A}|=3$). The edge weight between agents $i$ and $j$ is
then the windowed agreement frequency
{\small
\begin{equation}
  w_t(i, j) \;=\; \frac{1}{T_w} \sum_{s = t - T_w + 1}^{t}
    \mathbf{1}\bigl[a_i(s) = a_j(s)\bigr] \;\in\; [0, 1],
  \label{eq:edge-weight}
\end{equation}}%
with window $T_w = \windowTw$. We sparsify by retaining edges with
$w_t(i,j) > w_0$ at $w_0 = \thresholdW$ (well above the
action-uniform baseline of $\approx 1/|\mathcal{A}|$, so retained
edges represent meaningfully elevated agreement). The graph is
reconstructed every $\Delta t = \snapEvery$ steps with 50\%
temporal overlap.

The construction reads only the discrete action labels, so it is
invariant to who generates them. In our setup, each market
participant (each node $i \in V$) is driven by a single LLM call
conditioned on a private persona and the current market state,
so one LLM agent corresponds to exactly one financial agent
throughout. LLM personas do not make the Ricci-curvature operator
more powerful, but they produce a richer baseline action stream
than ABMs with hardcoded trader archetypes (e.g.,
noise~/ fundamental~/ momentum traders), so the agent graph in
the no-herding (subcritical) regime is more dispersed and the
contrast against the herded regime is sharper.

\subsection{Curvature and sign decomposition}
\label{sec:curvature-and-signs}
\label{sec:edge-curvature}
\label{sec:sign-decomp}

Each node carries the lazy-walk transition kernel
{\small
\begin{equation}
  \mu_i^t(j) \;=\;
    \alpha\,\delta_{ij}
    \;+\;
    (1 - \alpha)\,\frac{w_t(i, j)}{\sum_k w_t(i, k)},
  \label{eq:lazy-walk}
\end{equation}}%
with laziness $\alpha = \alphaLazy$ matching
Sandhu et al.~\cite{SandhuRicciFlow2016} for direct comparability.
The Ollivier-Ricci curvature on each edge $(i, j) \in E_t$ is
{\small
\begin{equation}
  \kappa_{OR}(i, j; t) \;=\; 1 \;-\; \frac{W_1(\mu_i^t,\, \mu_j^t)}{d_t(i, j)},
  \label{eq:orc}
\end{equation}}%
where $W_1$ is the 1-Wasserstein distance solved \emph{exactly}
by linear programming (POT~\cite{flamary2021pot}), and we set the
edge length $d_t(i,j) = w_t(i,j)$, treating the agreement weight
directly as a similarity-as-distance (higher agreement
$\Rightarrow$ shorter edge). We do \emph{not} use $1 - w_t$ or
$-\log w_t$: those would map a herding clique (high $w_t$) to
\emph{long} effective distances and thus invert the sign of the
herding signal. On weighted graphs, discrete Ricci
curvature has a natural community/bridge
interpretation~\citep{Sia2019,Ni2019}: positively curved edges
form within-community structure, while negatively curved edges
form bridges \emph{between} communities. This licenses a clean
sign decomposition
{\small
\[
  E_t = E_t^{+}\,\cup\,E_t^{0}\,\cup\,E_t^{-},\quad
  E_t^{+} = \{e:\kappa_{OR}(e;t)>\kappa_{+}\},\quad
  E_t^{-} = \{e:\kappa_{OR}(e;t)<\kappa_{-}\},
\]}%
with $\kappa_{+} = +0.1$ and $\kappa_{-} = -0.1$. The two scalars
we monitor are the herding-side mean
$\bar\kappa_{OR}^{+}(t)$ over $E_t^{+}$, which rises under cascade
onset (information-cascade
mechanism~\citep{BikhchandaniHirshleiferWelch1992}), and the
contagion-side fraction $\beta_{-}(t) = |E_t^{-}|/|E_t|$, which
rises as bridges multiply (network-interlinkage
mechanism~\citep{ElliottGolubJackson2014,Jiang2023}). The same
curvature operator generates both quantities, and the signs map
onto the two distinct mechanisms.

\begin{wrapfigure}{r}{0.6\textwidth}
\vspace{-16mm}
  \centering
    \includegraphics[width=0.55\textwidth]{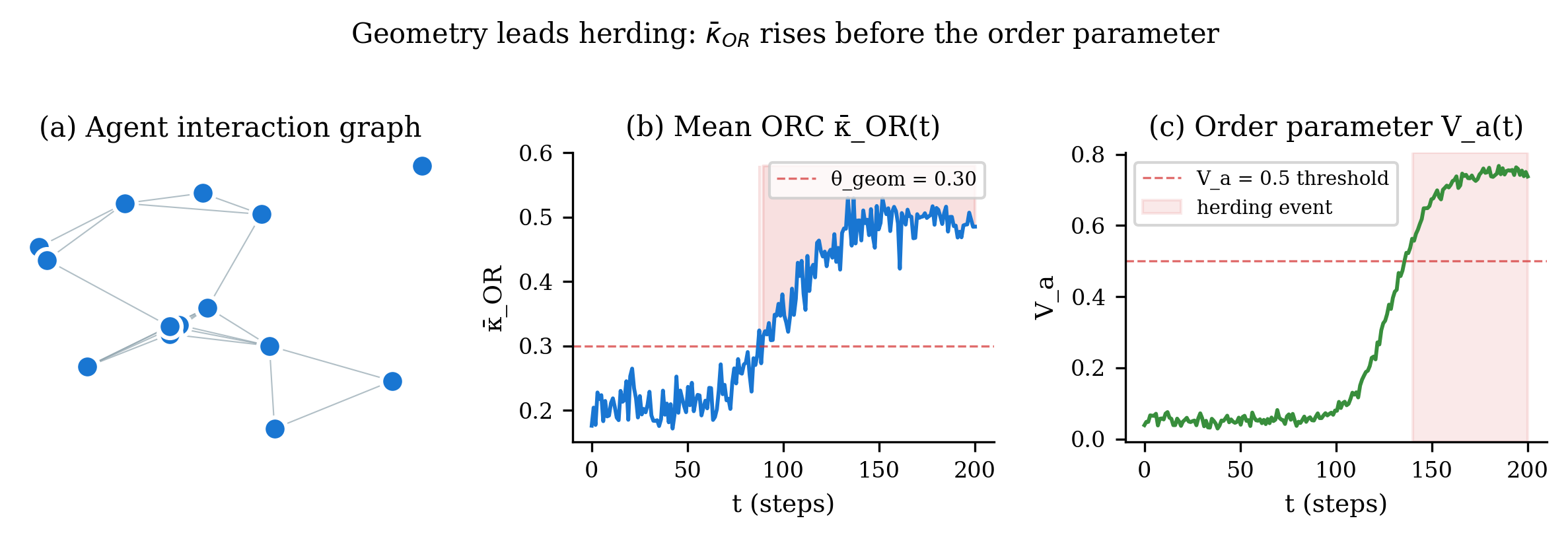}
  \caption{Geometry leads herding on the agent interaction graph.
  \textbf{(a)} A snapshot of $G_t$ during the cascade window.
  \textbf{(b)} Mean Ollivier-Ricci curvature $\bar\kappa_{OR}(t)$
  rises through the geometric threshold
  $\theta_{\text{geom}}=\thetaGeom$ before \textbf{(c)} the order
  parameter $V_a(t)$ crosses the herding-event threshold
  $\theta_{\text{event}}=\thetaEvent$.}
  \label{fig:pedagogical}
  \vspace{-2.0em}
\end{wrapfigure}

\begin{figure}[t]
\vspace{-2.0em}
  \centering
  \includegraphics[width=0.75\linewidth]{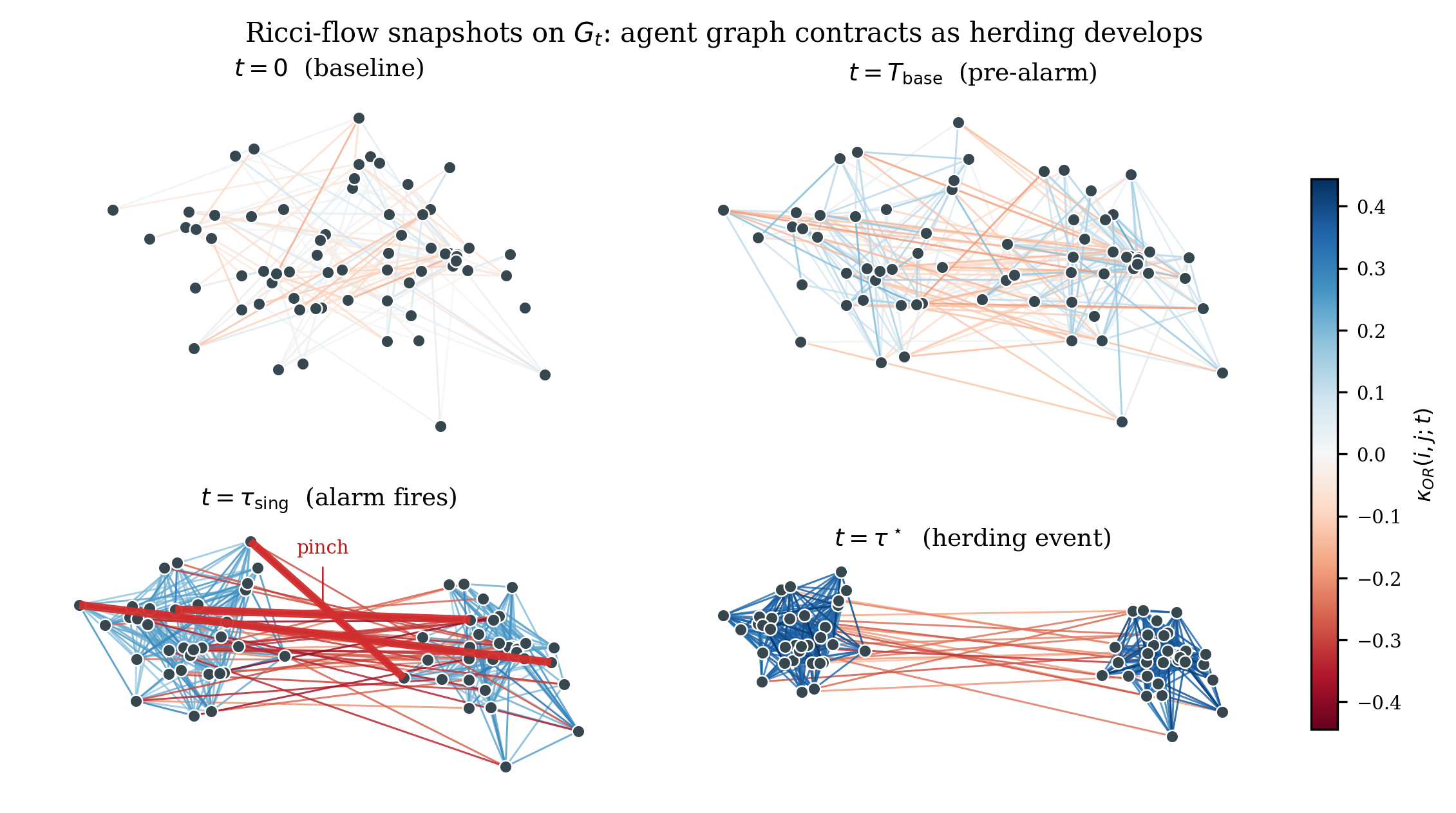}
  \caption{Ricci-flow geometric evolution of $G_t$ on a single
  supercritical CWS trajectory. Edges coloured by
  $\kappa_{OR}(i,j;t)$ on a diverging scale (red = negative,
  between-clique \emph{bridge}; blue = positive, within-clique
  \emph{cascade}). Across snapshots the graph contracts into a
  dense crystallised clique while highly-negative bridge edges
  connect it to peripheral nodes, the topological signature
  $\beta_{-}$ targets.}
  \label{fig:ricci-flow-snapshots}
  \vspace{-1.0em}
\end{figure}
\subsection{Detection rules}
\label{sec:detection}

\paragraph{Goal.} Given the two curvature time series
$\bar\kappa_{OR}^{+}(t)$ and $\beta_{-}(t)$, we fire an alarm at
the earliest time at which their dynamics deviate from a pre-stress
baseline, and measure the lead time $\Delta$ between this alarm
and the order-parameter herding event
$\tau^{\star} = \min\{t : V_a(t) > \theta_{\text{event}}\}$
($\theta_{\text{event}} = \thetaEvent$) on the same trajectory.

\paragraph{Detector.} For each signal we run a one-sided
cumulative-sum (CUSUM) detector~\citep{Page1954}, augmented on the
contagion side by a Kendall-$\tau$ slope test as a complementary
trend channel:
{\small
\begin{align}
  S_t^{+} &= \max\bigl(0,\,S_{t-1}^{+}+(\bar\kappa_{OR}^{+}(t)
              - \mu_{\text{base}}^{+} - k_{+})\bigr),
              \;A_t^{+}=\mathbf{1}[S_t^{+}>h_{+}],
              \label{eq:cusum-up} \\
  S_t^{-} &= \max\bigl(0,\,S_{t-1}^{-}+(\beta_{-}(t)
              - \mu_{\text{base}}^{-} - k_{-})\bigr),
              \;A_t^{-,\text{cusum}}=\mathbf{1}[S_t^{-}>h_{-}],\\
  A_t^{-} &= A_t^{-,\text{cusum}}\;\vee\;
              \mathbf{1}\!\left[\text{Kendall}_{[t-W_\tau,t]}(\beta_{-})
                                  >\tau_{\text{thresh}}\right].
\end{align}}%
\noindent The CUSUM tracks the level of bridge emergence while the
Kendall-$\tau$ test tracks its trend; the OR combination is robust
to either signal alone being noisy. Operating-point calibration of
$(k_{\pm}, h_{\pm}, \tau_{\text{thresh}}, W_{\tau})$ is in
Appendix~\ref{app:calibration}.

\paragraph{Why CUSUM and Kendall-$\tau$.} A z-score on a rolling
baseline is a Shewhart-style detector and is blind to slow drifts:
a $1\sigma$ mean shift takes on average 44 windows to signal,
whereas CUSUM signals in roughly 10. The contagion-bridge regime
is precisely a slow trend toward the unstable manifold (the
Scheffer-style critical-slowing-down regime~\citep{Scheffer2009}),
so CUSUM is detection-delay-matched to the regime we target; the
Kendall-$\tau$ slope test adds robustness to non-Gaussian noise on
the trend, where the parametric CUSUM threshold can be brittle.

\subsection{Forward-looking flow descriptor and effective vocabulary}
\label{sec:other-signals}

\paragraph{Singularity time $\tau_{\text{sing}}$.}
The same geometry that produces $\bar\kappa_{OR}(t)$ also drives a
discrete Ricci flow on $G_t$. Unlike~\cite{Srinivasan2026} which use neckpinch surgery
as a static clustering operator on a price graph, we use Ricci
flow as a \emph{descriptor generator}: at every dynamic snapshot
$G_t$, we run a fresh flow and record its singularity time
$\tau_{\text{sing}}(t) = \inf\{\,s>0 : \exists\,e \in E_t,\
\kappa_{OR}^{(s)}(e) \to -\infty\}$, the first-hitting time of a
Ricci-flow neckpinch. Whereas $\bar\kappa_{OR}(t)$ summarises the
current graph, $\tau_{\text{sing}}(t)$ predicts the
time-to-coordination from it\textemdash a forward-looking time series
rather than a clustering output.

\paragraph{Effective vocabulary $V_{\text{eff}}$.}
$V_{\text{eff}}(t) = \exp(H(p_t))$, where $H(p_t)$ is the entropy
of the codebook utilization distribution from a fixed
three-dimensional finite-scalar-quantization (FSQ) codebook with $L_d = \fsqLevels$ levels per
dimension (total $K = L_d^3 = \fsqK$)~\citep{Mentzer2024FSQ}.
Agents homogenize their behavioural repertoire as herding develops
and $V_{\text{eff}}$ contracts. The codebook is intentionally
non-learned\textemdash a learned codebook would adapt to the very
distribution shift that $V_{\text{eff}}$ is designed to detect.
$V_{\text{eff}}$ does not depend on the geometric pipeline; this
is the source of its value as a complementary sanity check
(\S\ref{sec:rq4-result}).

\subsection{Theoretical anchor: mean-field bridge to CSAD}
\label{sec:csad-bridge}

The standard finance instrument for measuring herding from the
return cross-section is the CSAD regression
of CCK~\cite{ChangChengKhorana2000}. We anchor our geometric
signal to this instrument via a mean-field scaling argument.
\begin{table}[h]
\centering
\small
\vspace{-2.0em}
\caption{Phenomenological mapping: herding state $\leftrightarrow$ CSAD $\leftrightarrow$ $\bar\kappa_{OR}(t)$.}
\label{tab:herding-mapping}
\begin{tabular}{lll}
\toprule
\textbf{Herding state} & \textbf{CSAD} (return dispersion) & $\bar\kappa_{OR}$ (agent-graph curvature) \\
\midrule
Strong (agents follow the crowd) & Low (returns concentrate) & High (neighborhoods overlap) \\
Weak (agents decide independently) & High (returns disperse) & Low or negative (neighborhoods separate) \\
\bottomrule
\end{tabular}
\end{table}

\begin{proposition}[Mean-field bridge to CSAD; dominant-order scaling]
\label{prop:bridge}
Under standard assumptions (agent graph A1, lazy-walk curvature
A2, mean-field concentration of action correlations A3, linear
price impact A4, CCK CSAD estimand A5; full statement in
Appendix~\ref{app:proof}),
{\small
\begin{equation}
  \mathrm{CSAD}_t
  \;=\; \sigma_\xi \sqrt{2/\pi}\,\bigl(1 - \bar\kappa_{OR}(t)\bigr)^{1/2}
  \;+\; \mathcal{O}(N^{-1/2})
  \label{eq:bridge}
\end{equation}}%
\noindent holds in the $N \to \infty$ limit, so $\mathrm{CSAD}_t$ is
monotonically decreasing in $\bar\kappa_{OR}(t)$ at dominant
order. We state this as a scaling identity because Step~2 of the
derivation invokes a closed-form $W_1$ between two near-degenerate
kernels whose remainder bound we do not establish here; a fully
rigorous proof is left to follow-on work.
\end{proposition}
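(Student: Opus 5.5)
The plan is a chain of three mean-field reductions, each contributing an $\mathcal{O}(N^{-1/2})$ remainder. Each link connects one of the quantities $\mathrm{CSAD}_t$, the cross-agent return dispersion, the pairwise action correlation, and $\bar\kappa_{OR}(t)$ to the next.

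\textbf{Step 1 (returns to pairwise correlation).} Under the linear price-impact assumption A4, I write each agent-level return as $r_{i,t} = \lambda a_i(t) + \xi_{i,t}$. Here $a_i(t)$ is the agent's action and $\xi_{i,t}$ is idiosyncratic Gaussian noise with scale $\sigma_\xi$. Under A3, actions are exchangeable with common pairwise agreement $\rho_t$, and I normalise so that the deviation of a single return from the cross-sectional mean is approximately Gaussian. Its variance should be $\sigma_\xi^2(1-\rho_t)$: the common component cancels against the mean, and only the uncorrelated fraction survives. The CCK estimand (A5) is the mean absolute deviation $\frac1N\sum_i |r_{i,t}-\bar r_t|$. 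Using $\mathbb{E}|Z| = \sigma\sqrt{2/\pi}$ for $Z\sim\mathcal{N}(0,\sigma^2)$, together with a law of large numbers for the empirical average, I expect
\[
\mathrm{CSAD}_t = \sigma_\xi\sqrt{2/\pi}\,(1-\rho_t)^{1/2} + \mathcal{O}(N^{-1/2}).
\]
The CLT fluctuation of the average gives the $N^{-1/2}$ remainder. Replacing $\bar r_t$ by its mean costs only $\mathcal{O}(N^{-1})$.

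\textbf{Step 2 (correlation to curvature).} Here I need $\bar\kappa_{OR}(t) = \rho_t + \mathcal{O}(N^{-1/2})$. Under A3, the window weights in Eq.~\eqref{eq:edge-weight} concentrate, so $w_t(i,j)\approx \bar w_t$ uniformly over pairs, and $\bar w_t$ is an affine function of $\rho_t$. In that regime the graph is close to a complete weighted graph. The lazy kernels in Eq.~\eqref{eq:lazy-walk} then become near-degenerate: $\mu_i$ and $\mu_j$ differ only in the masses placed on $i$ and $j$. The optimal transport plan should move mass $\alpha - (1-\alpha)/(N-1)$ from $i$ to $j$ across the single edge of length $d_t(i,j)=w_t(i,j)$. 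The off-diagonal masses should match up to weight fluctuations of order $N^{-1/2}$.

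Substituting this plan into Eq.~\eqref{eq:orc} gives an explicit closed form for $\kappa_{OR}(i,j;t)$ in terms of $\alpha$ and $\bar w_t$. I then fix the affine normalisation in A3, choosing the agreement-to-correlation map, so that this closed form equals $\rho_t$ at leading order, and average over edges. This step is where the $\alpha=\tfrac12$ convention of A2 enters.

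\textbf{Step 3 (combine).} Substituting Step 2 into Step 1 yields Eq.~\eqref{eq:bridge}. The square root is Lipschitz away from $\rho_t=1$, so the $N^{-1/2}$ errors propagate without amplification. Monotonic decrease in $\bar\kappa_{OR}(t)$ is then immediate.

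\textbf{Main obstacle.} I expect Step 2 to be the hardest. The obstacle is proving that the "move diagonal mass across one edge" plan is near-optimal when the weights are only approximately uniform. In particular, I would need to control the $W_1$ contribution from the $\mathcal{O}(N^{-1/2})$ off-diagonal mismatches, each weighted by pairwise path lengths. The natural tool is Kantorovich duality: I would try exhibiting a $1$-Lipschitz test function supported near $\{i,j\}$ to get a matching lower bound. However, the edge lengths $d_t=w_t$ are heterogeneous, so shortest-path distances may differ from direct edge lengths, and this could break the bound. This is precisely the remainder the statement declines to bound, so I would present the argument at dominant order and state the uniform-weight approximation explicitly as a modelling step.
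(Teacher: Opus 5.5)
Your route is the paper's: collapse the edge weights onto one mean-field parameter, turn it into curvature via a closed-form $W_1$, turn the same parameter into CSAD via a half-normal computation, and combine. Your $\rho_t$ plays the role of the paper's $M(t)^2$. The gap is that your concrete Step~2 fails at leading order, not only in its remainder.

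With $w_t(i,j)\equiv\bar w_t$ on the complete graph and $w_t(i,i)=0$, Eq.~\eqref{eq:lazy-walk} gives $\mu_i(i)=\alpha$ and $\mu_i(k)=(1-\alpha)/(N-1)$ for $k\neq i$. Hence $\mu_i-\mu_j=m(\delta_i-\delta_j)$ with $m=\alpha-(1-\alpha)/(N-1)$. By Kantorovich--Rubinstein duality, $W_1(\mu_i,\mu_j)=m\,d_t(i,j)=m\,\bar w_t$ exactly. So your one-edge plan is exactly optimal, and near-optimality is not the real obstacle. It follows that
\[
  \kappa_{OR}(i,j;t)\;=\;1-m\;=\;(1-\alpha)\,\frac{N}{N-1},
\]
which has no dependence on $\bar w_t$.

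This is structural, not an artefact of exact uniformity. Because $d_t=w_t$, under $w_t\mapsto c\,w_t$ the kernel is unchanged and all distances scale linearly. So $\kappa_{OR}$ is invariant under global rescaling of the weights on a fixed edge set. At leading order, A3 makes $w_t$ exactly $f(M(t))$ times a fixed pattern.

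Consequently, no affine agreement-to-correlation map can turn your closed form into $\rho_t$. In the mean-field regime $\bar\kappa_{OR}\to 1-\alpha=\tfrac12$ at every herding level, and with $w_0=0.5$ the retained graph is simply complete or empty. Any herding dependence of $\bar\kappa_{OR}$ would have to come from weight heterogeneity or from which edges survive sparsification. Those are precisely the fluctuation terms you discard as $\mathcal{O}(N^{-1/2})$.

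The paper's Step~2 asserts $1-\kappa_{OR}\propto 1-M(t)^2$ without computing the transport, and flags only a missing remainder bound. Making the transport explicit, as you do, shows that the leading term itself is herding-independent. A genuinely new mechanism is needed here.

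Step~1 also needs more than A4. Take $r_{i,t}=\lambda a_i(t)+\xi_{i,t}$ with $\xi_{i,t}$ i.i.d.\ of fixed variance, and write $\sigma_a^2=\mathrm{Var}\,a_i$. Then $\mathrm{Var}(r_{i,t}-\bar r_t)=\bigl(\sigma_\xi^2+\lambda^2\sigma_a^2(1-\rho_t)\bigr)(1-1/N)$. The idiosyncratic noise is uncorrelated across agents, so it survives in full rather than shrinking by the factor $1-\rho_t$. The action part is discrete, so the half-normal constant does not apply to it: for $\pm1$ actions with mean $M$, $\mathbb{E}|a_i-M|=1-M^2$, which is linear rather than square-root.

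Obtaining $\sigma_\xi^2(1-\rho_t)$ requires pinning total return variance at $\sigma_\xi^2$ with cross-sectional correlation $\rho_t$. That amounts to assuming the idiosyncratic variance scales as $1-\rho_t$, an extra modelling assumption rather than a consequence of A4. The paper's Step~3 has the same hole: if the $\beta_i M(t)$ term cancels against $\bar r_t$ as it argues, what remains is $\sigma_\xi\sqrt{2/\pi}$ (for Gaussian $\xi$), with no herding dependence.
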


Appendix~\ref{app:proof} contains assumptions A1\textendash A5, the four-step
derivation, failure modes (Remark~\ref{rem:failure-modes}), and
identification caveats for the empirical $\hat\gamma_3$ test
(Remark~\ref{rem:identification}).

\paragraph{Operational anchor (LSV trading track).}
\label{sec:lsv-anchor}
The agent interaction graph is unobservable to outside
investors\textemdash only the trades it generates and the prices they move
are visible\textemdash so the trading-flow pillar of the herding
literature~\cite{LSV1992,Sias2004} addresses the
disclosure-constrained regime via 13F-style buy/sell imbalance
(here, ``13F'' refers to the SEC's quarterly Form~13F filings of
institutional holdings, on which institutional-herding measures
are typically computed~\cite{Wermers1999}).
Within our simulator substrates the action stream is observable
by construction at every step, which lifts that identification
restriction and makes the substrate-pivot result testable: with
simultaneous access to the action stream (input to GeomHerd) and
the cleared trade flow (input to LSV), we can empirically measure
the temporal gap between the two detectors on the same
trajectories (\S\ref{sec:rq3-result}). A 13F-style fund-as-agent
deployment template\textemdash in which each institutional
manager is treated as a single node in the agent graph and its
quarterly 13F holdings (rather than per-step actions) drive the
edges\textemdash is in Appendix~\ref{app:realdata}.
\section{Experiments}
\label{sec:results}
\subsection{Research questions}
\label{sec:rqs}
We organise the empirical study around four research questions
that together validate the core claims of GeomHerd:
\textbf{(RQ1)} does the agent-graph curvature signal anticipate
herding earlier than detectors built on price-correlation graphs
or trading-flow aggregates?
\textbf{(RQ2)} does the sign decomposition into
$(\bar\kappa_{OR}^{+}, \beta_{-})$ deliver complementary
herding/contagion alarms covering within-clique tightening and
between-clique-bridge regimes?
\textbf{(RQ3)} is the agent-graph signal directionally consistent
with the classical CSAD/CCK return-track and the LSV trading-flow
herding statistics, as predicted by Proposition~\ref{prop:bridge}?
\textbf{(RQ4)} does the signal carry forecasting content beyond detection alone, and does it generalise beyond the headline CWS substrate to a non-financial system?
\subsection{Setup}
\label{sec:setup}
\paragraph{Substrate.}
The Cividino-Sornette continuous-spin (CWS)
substrate~\citep{CividinoWestphalSornette2023} is our headline
testbed: a physics-inspired financial agent-based model (ABM) in
the Ising / O($n$)-vector family, where noise-trader herding is
modelled as a critical phenomenon in continuous-spin coupling and
rational fundamentalists rebalance in response. Mechanically, CWS
is a discrete-time simulator in which $N$ heterogeneous agents
repeatedly choose actions over $n_a$ assets, with each agent's
next action driven schematically by a private signal, the average
action of its neighbours weighted by a coupling strength $\kappa$,
and idiosyncratic noise; cleared trades feed back into asset
prices through a linear-impact rule (Appendix~\ref{app:proof}, A4).
We instantiate $N = 66$ agents and $n_a = 4$ assets, and sweep the
coupling parameter
$\kappa \in \{0.5, 0.8, 1.2, 1.8, 2.5\}$\textemdash which
interpolates from independent decision ($\kappa < 1$) to herd
coordination ($\kappa > 1$)\textemdash at \seedsCWS{} seeds per
level, yielding \nTrajectories{} trajectories
(\nSupercritical{} supercritical, \nSubcritical{} subcritical).
Simulated returns reproduce the regularities
of Cont~\cite{cont2001empirical} (tail index
$\alpha = \styFactAlpha$, volatility-ACF slope
$\beta = \styFactBeta$, martingale raw returns); a full panel is
in supplementary materials. For cross-substrate transfer we
additionally evaluate on the Vicsek
self-driven-particle model~\citep{Vicsek1995}\textemdash a
canonical physics model of flocking in which $N$ particles move
at constant speed and align their headings with local neighbours
under angular noise $\eta$, undergoing an order--disorder phase
transition at a critical noise level $\eta_c$. As a non-financial
system, Vicsek tests whether the curvature signature reflects
universal collective coordination rather than a finance-specific
artefact (see \S\ref{sec:rq4-result} and
Appendix~\ref{app:vicsek-details}).
\paragraph{Baselines.}
We benchmark against a seven-detector slate: trade-flow
LSV~\citep{LSV1992}; return-cross-section CSAD
of CCK~\cite{ChangChengKhorana2000}; three price-correlation
geometric methods~\citep{SandhuRicciFlow2016, HuangLapCSAD2023, Srinivasan2026};
the point-process geometric detector
of Jiang et al.~\cite{Jiang2023}; and an action-agreement
mutual-information (AA-MI) baseline adapted from the
synchronous-action-coupling probe of
Tessera et al.~\cite{Tessera2026}.
\paragraph{Metrics.}
Following~\cite{TopologicalEWS2025,Bury2021}, we report
(i)~precision, recall, F1 and False Alarm Rate (FAR) per day on
supercritical/subcritical detection; (ii)~AUROC and AUPRC;
(iii)~conditional median lead time; and (iv)~the
rare-event-stable metric of Nikolopoulos~\cite{Nikolopoulos2025}
appropriate for the low-event-prevalence regime. Paired-bootstrap
differences are computed on co-firing trajectories with
$n_{\text{boot}} = 5000$.
\paragraph{Two operating points.}
We report two operating points for the upward CUSUM on
$\bar\kappa_{OR}^{+}$, drawn from the calibration sweep of
Appendix~\ref{app:calibration}. The \emph{recall-oriented} point
$(k_\sigma, h_\sigma) = (0.50, 4.0)$ delivers a long lead at
higher subcritical FAR; this is the abstract figure. The
\emph{precision-oriented} point $(2.0, 4.0)$ delivers a shorter
but tightly FAR-controlled lead used for all paired contrasts in
Table~\ref{tab:lead-time}, since head-to-head lead comparisons
are only meaningful at FAR-controlled thresholds.
\subsection{Results}
\label{sec:results-main}
We report three sets of results, each structured as
\textit{(i)~brief experiment description, (ii)~result summary,
(iii)~connection to research questions}.
\subsubsection{Result 1: GeomHerd anticipates herding earlier
than price-based and trading-flow baselines (RQ1, RQ2)}
\label{sec:rq1-2-result}
\paragraph{Experiment.}
On the CWS replay set, we run the upward CUSUM on
$\bar\kappa_{OR}^{+}$ at the two operating points and the
contagion-bridge alarm on $\beta_{-}$ (CUSUM-plus-Kendall-$\tau$
rule). For each detector and trajectory we record whether and
when an alarm fires before the order-parameter herding event
$\tau^\star$, and then compare paired lead times against each
baseline on co-firing trajectories.
\paragraph{Result.}
Conditional on $\bar\kappa_{OR}^{+}$ firing, the median lead at
the precision-oriented operating point is \leadKappaBarVd{} steps
with 95\% CI~\leadKappaBarVdCI{} (Table~\ref{tab:lead-time}).
Against the closest geometric correlation-graph
detectors~\citep{Srinivasan2026,SandhuRicciFlow2016,HuangLapCSAD2023},
the paired-bootstrap median advantage is positive at
\diffSrinVd{}, \diffSandhuVd{}, and \diffGongVd{} steps
respectively, with the Lap-CSAD row significant at $\alpha = 0.05$
($p = \pGongVd$). Across paired comparisons we have
$n_{\text{paired}} \in \{6, 8, 9\}$, reflecting the
high-precision-low-recall regime; at the recall-oriented operating
point, the same comparison is better-powered
(Appendix~\ref{app:headline-recall}). On the same trajectories where
both an agent-graph and a price-correlation-graph detector fire
before $\tau^\star$, the agent-graph signal precedes price-based
signals by a pooled median of \agentLeadPriceSteps{} simulator
steps (95\% CI~\agentLeadPriceCI). The contagion-bridge detector
$\beta_{-}$ recalls 65\% of supercritical
trajectories with median lead \betaMinusLead{} steps
(Table~\ref{tab:precision_recall}), accepting a high subcritical
FAR (\betaMinusFAR) in exchange for early bridge alarm.
\paragraph{Connection to research questions.}
\textbf{RQ1}--\textbf{RQ2}: $\bar\kappa_{OR}^{+}$ leads price-graph baselines under FAR control; $\beta_{-}$ adds complementary contagion recall at higher subcritical FAR (Table~\ref{tab:precision_recall}). Near-chance AUROC reflects score sparsity, not detector failure---conditional lead is the appropriate metric~\cite{Nikolopoulos2025}.
\begin{table}[h]
\centering
\small
\vspace{-1.5em}
\caption{Paired-bootstrap lead-time difference (GeomHerd
$\bar\kappa_{OR}^{+}$ at the precision-oriented point minus
comparator) on the binary-edge CWS replay set, restricted to
co-firing trajectories.}
\label{tab:lead-time}
\begin{tabular}{lcccc}
\toprule
\textbf{Comparator} & \textbf{Lead diff. (steps)} & \textbf{95\% CI} & $n_{\text{paired}}$ & $p$-value \\
\midrule
Srinivasan~2026 & \diffSrinVd    & \diffSrinVdCI    & \diffSrinVdNPairs    & \pSrinVd   \\
Sandhu~2016     & \diffSandhuVd  & \diffSandhuVdCI  & \diffSandhuVdNPairs  & \pSandhuVd \\
Huang et al.~2023 (Lap-CSAD) & \diffGongVd    & \diffGongVdCI    & \diffGongVdNPairs    & \pGongVd   \\
CSAD (CCK)      & \diffCSADVd    & \diffCSADVdCI    & \diffCSADVdNPairs    & \pCSADVd   \\
\bottomrule
\end{tabular}
\vspace{-1.0em}
\end{table}
\begin{figure}[t]
\vspace{-12mm}
  \centering
  \includegraphics[width=0.85\linewidth]{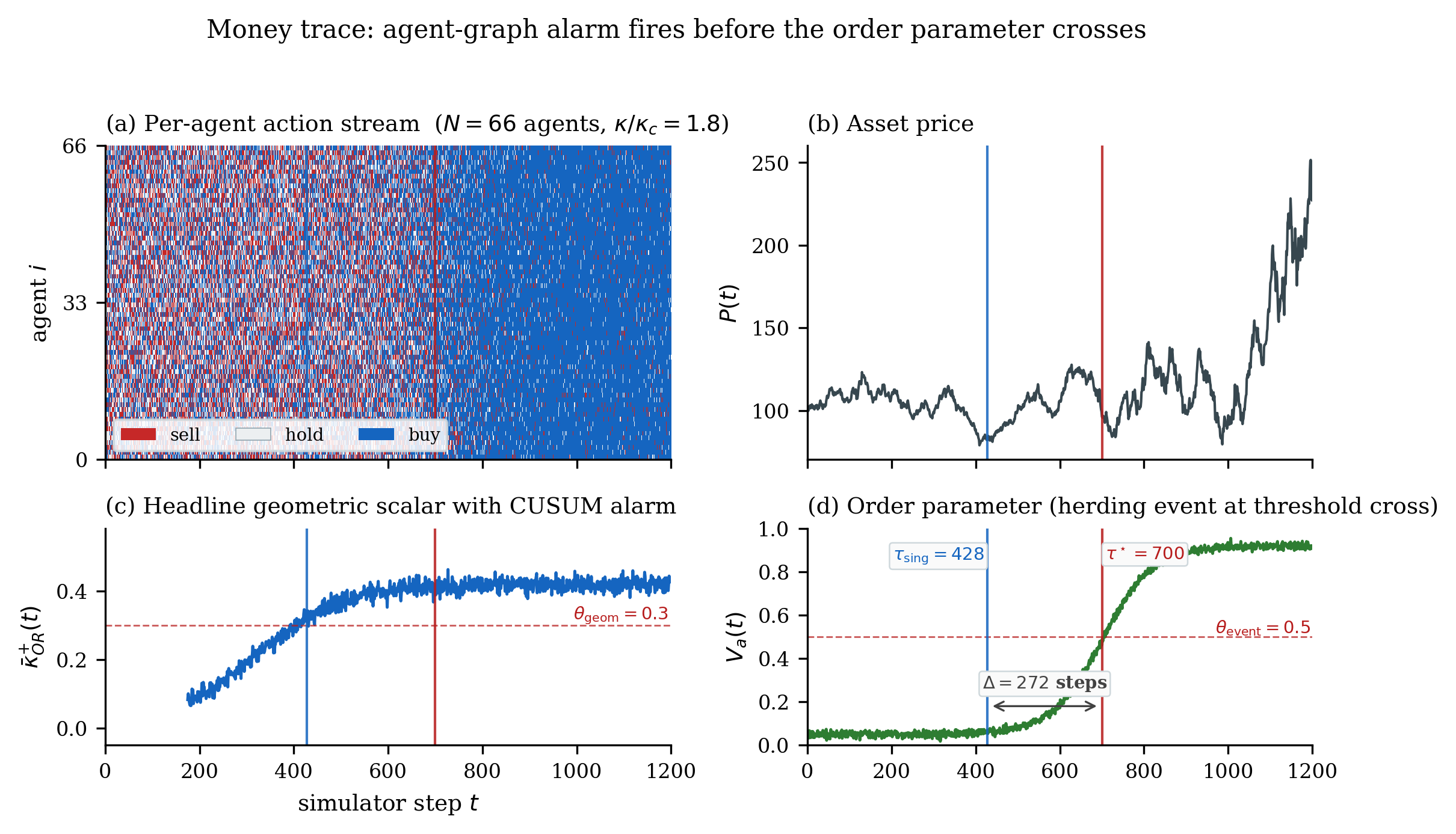}
  \caption{Money trace on a single supercritical CWS seed at
  $\kappa/\kappa_c = 1.8$. \textbf{(a\textendash d)} Per-agent action stream,
  asset price $P(t)$, headline geometric scalar
  $\bar\kappa_{OR}^{+}(t)$ with the CUSUM alarm (blue), and order
  parameter $V_a(t)$ with the herding event $\tau^\star$ (red).
  Vertical guides mark $\tau_{\mathrm{sing}}$ and $\tau^\star$ in
  every panel so the lead $\Delta = \tau^\star - \tau_{\mathrm{sing}}$
  is readable directly.
  The displayed lead matches the recall-oriented operating-point
  headline of \leadKappaBarLoose{} steps.}
  \label{fig:money-trace}
  \vspace{-0.6em}
\end{figure}
\begin{table}[h]
\centering
\footnotesize
\vspace{-1.0em}
\caption{Multi-axis detection profile under the binary-edge
calibration. The agent-graph row group (top) is the natural
head-to-head set; trading-flow / contagion-direction baselines
(LSV, Jiang) achieve longer raw lead times by firing on every
trajectory regardless of regime ($\mathrm{FAR}_{\text{sub}}=1.00$)
and are not substitutes for a regime classifier.}
\label{tab:precision_recall}
\begin{tabular}{lcccccc}
\toprule
Detector & Precision & Recall$_{\mathrm{super}}$ & FAR$_{\mathrm{sub}}$ & AUROC & Median lead & 95\% CI \\
\midrule
\multicolumn{7}{l}{\textit{Agent-graph substrate (head-to-head)}} \\
GeomHerd $\bar{\kappa}_{\mathrm{OR}}^{+}$ (ours) & 0.45 & 0.04 & 0.07 & 0.48 & 178 & [71, 407] \\
GeomHerd $\tau_{\mathrm{sing}}$ (ours) & 0.42 & 0.03 & 0.07 & 0.48 & -93 & [-216, 233] \\
AA-MI~\cite{Tessera2026}$^{\dagger}$ & n/a & 0.00 & 0.00 & 0.50 & n/a & [n/a, n/a] \\
\addlinespace
\multicolumn{7}{l}{\textit{Price-correlation substrate}} \\
Srinivasan 2026 & 0.71 & 0.79 & 0.49 & 0.66 & 20 & [-52, 65] \\
Sandhu 2016 & 0.72 & 0.95 & 0.55 & 0.72 & 80 & [43, 106] \\
Lap-CSAD~\cite{HuangLapCSAD2023} & 0.85 & 0.85 & 0.23 & 0.80 & -42 & [-74, -8] \\
\addlinespace
\multicolumn{7}{l}{\textit{Trading-flow / contagion direction (different phenomenon)}} \\
CSAD & 0.69 & 1.00 & 0.68 & 0.75 & 180 & [150, 214] \\
LSV 1992 & 0.60 & 1.00 & 1.00 & 0.48 & 355 & [333, 388] \\
Jiang 2023 & 0.60 & 1.00 & 1.00 & 0.50 & 306 & [262, 329] \\
\addlinespace
\multicolumn{7}{l}{\textit{Sign-decomposed contagion-bridge detector (post-hoc on v34d trajectories)}} \\
\shortstack[l]{GeomHerd $\beta_{-}$\\($\tau_{\mathrm{neg}}{=}\betaMinusTauNeg$, CUSUM+slope, up)} & \betaMinusPrecision{} & \betaMinusRecall{} & \betaMinusFAR{} & \betaMinusAUROC{} & \betaMinusLead{} & \betaMinusLeadCI{} \\
\bottomrule
\end{tabular}

\par\vspace{2pt}
{\scriptsize
$^{\dagger}$Our AA-MI baseline (adapted from the
synchronous-action-coupling probe of
Tessera et al.~\cite{Tessera2026}) returns a degenerate-output
flag on \nDegTessera{}/\nTotalTessera{} trajectories under the
binary-edge configuration: the saturated agent graph drives
baseline AA-MI variance below numerical resolution.
$^{\ddagger}$The $\beta_{-}$ detector uses CUSUM+slope on
negative-curvature edges ($\tau_{\mathrm{neg}}=\betaMinusTauNeg$,
upward direction), calibrated on the supercritical replay set;
subcritical FAR is on a separately generated set
($n_{\text{sub}} = \betaMinusNSub$).}
\vspace{-1.0em}
\end{table}
\subsubsection{Result 2: The geometric signal is consistent with
classical CSAD and LSV (RQ3)}
\label{sec:rq3-result}
\label{sec:cck-augmented}
\paragraph{Experiment.}
We anchor the agent-graph signal to the two pillars of the
classical herding literature.
\textit{Return track:} we estimate the augmented CCK regression
{\small{
\begin{equation}
  \mathrm{CSAD}_t \;=\; \alpha + \gamma_1\,|R_{m,t}| + \gamma_2\,R_{m,t}^2
                       + \gamma_3\,\bar\kappa_{OR}(t) + \varepsilon_t
  \label{eq:augmented-cck}
\end{equation}
}}%
per seed on the CWS replay set with heteroskedasticity-and-autocorrelation-consistent (HAC; Newey--West) standard
errors, summarising $\hat\gamma_3$ across \gammaThreeNSeeds{}
supercritical seeds under deterministic replay.
\textit{Trading track:} we compute the windowed LSV statistic on
simulated buy/sell flows and measure the temporal gap to
$\bar\kappa_{OR}^{+}$ on co-firing trajectories.
\paragraph{Result.}
The cross-seed median of $\hat\gamma_3$ is \gammaThreeMedian{}
with bootstrap CI~\gammaThreeCI, consistent with
Eq.~\eqref{eq:bridge}; the median CCK quadratic coefficient
shifts from \betaTwoBaseMedian{} to \betaTwoAugMedian{} (absolute
median change \betaTwoAbsShrinkPct\% in $|\beta_2|$) once
$\bar\kappa_{OR}$ is included, indicating that part of the
return-dispersion nonlinearity previously absorbed by the
quadratic term is explained by agent-graph curvature.
LSV achieves recall $1.00$ on supercritical \emph{and} subcritical
trajectories (Table~\ref{tab:precision_recall}, last block), so
it is not a regime classifier; on co-firing trajectories
$\bar\kappa_{OR}^{+}$ precedes LSV in time, since the agent-graph
clique is detectable in the action stream before buy/sell
imbalance accumulates to the LSV threshold.
\paragraph{Connection to research questions.}
\textbf{RQ3}: sign-consistency with CSAD (Eq.~\eqref{eq:bridge}) and directional lead vs.\ LSV on co-fires; identification limits are in  Remark~\ref{rem:identification}.
\subsubsection{Result 3: The signal carries forecasting content and generalises out-of-domain (RQ4)}\label{sec:rq4-result}

\begin{wrapfigure}[14]{r}{0.46\textwidth}
  \centering
  \vspace{-2.0em}
  \includegraphics[width=\linewidth]{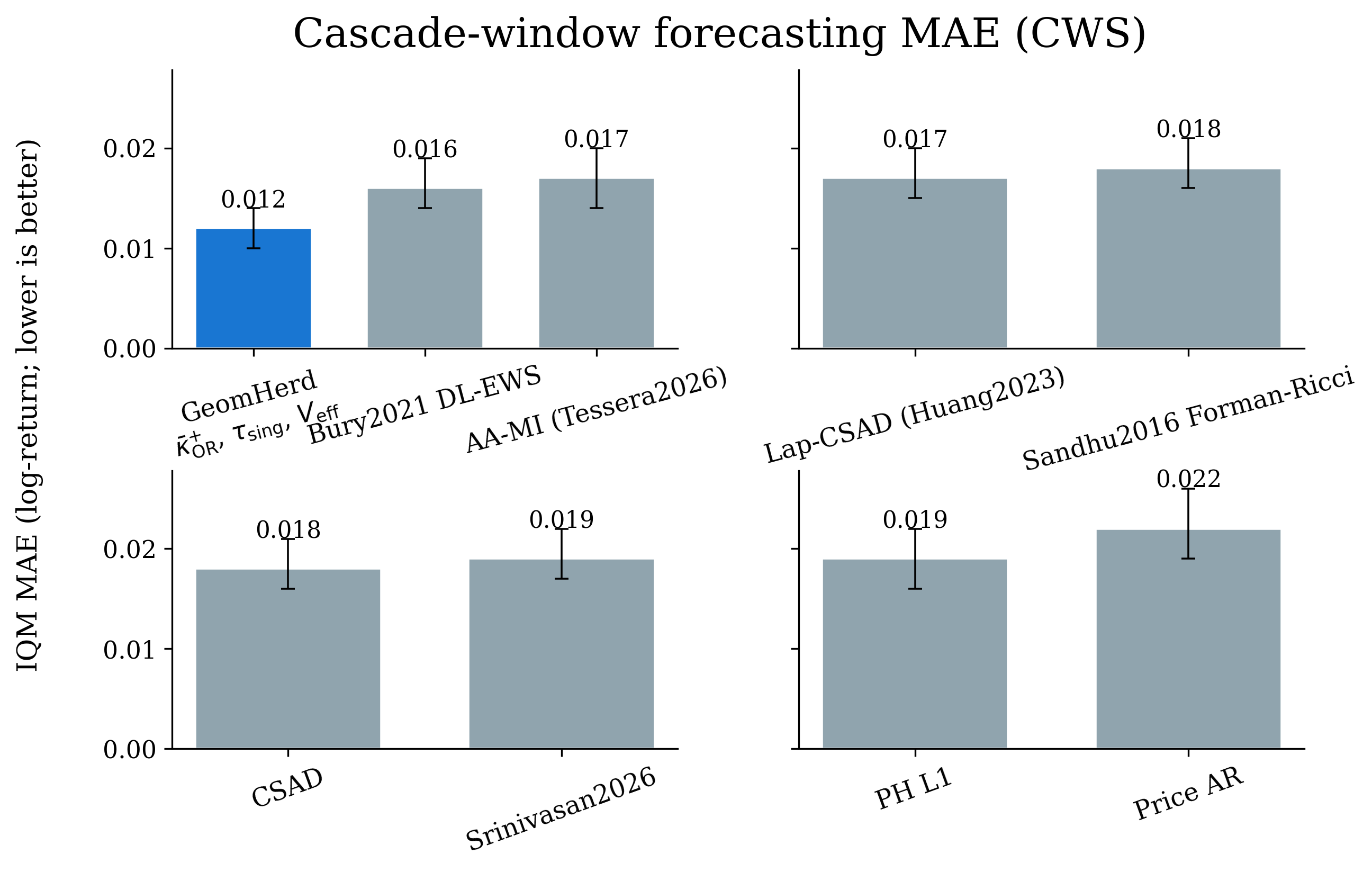}
  \begingroup
  \captionsetup{font=small, skip=4pt}
  \caption{Cascade-window forecasting MAE (CWS, log-return scale).
  GeomHerd triplet
  $(\bar\kappa_{OR}, \tau_{\text{sing}}, V_{\text{eff}})$ vs.\
  herding-detector baselines and a price-only AR baseline;
  \texttt{rliable}~\citep{Agarwal2021} IQM bars.}
  \label{fig:forecasting-mae}
  \endgroup
  % \vspace{1.0em}
\end{wrapfigure}
\par
\paragraph{Experiment.}
\textit{(i) Vicsek transfer.} We sweep angular noise
$\eta \in \{0.5, 1.0, 1.6, 2.0, 2.5\}$ at 20 seeds per level
($N = 600$ particles, $\eta_c \approx 1.6$); the agent graph is
built from $k$-NN ($k = 10$) on the heading sequence with binary
edge weights, and each trajectory is scored by
$\bar\kappa_{OR}(\tau^\star)$ at the polarisation event.
\textit{(ii) Forecasting head.} On the CWS substrate, we train a
curvature-conditioned next-step forecasting head: a Kronos-style
discrete price tokeniser (a learned vector-quantiser that maps
OHLCV sequences into a fixed token vocabulary) feeds a
transformer that consumes the GeomHerd triplet
$(\bar\kappa_{OR}, \tau_{\text{sing}}, V_{\text{eff}})$ via
AdaLN-Zero conditioning (adaptive layer-norm with zero-initialised
gating). The price tokeniser is frozen; only the conditioning
layers are trained. We compare cascade-window forecasting mean
absolute error (MAE) on log-return scale against
herding-detector baselines and a price-only autoregressive (AR)
baseline.
\textit{(iii) Behavioural sanity check.} As a complement, we
track the effective vocabulary
$V_{\text{eff}}(t) = \exp(H(p_t))$ on the same CWS trajectories.
\begin{figure}[t]
  \centering
  \vspace{-3.0em}
  \begin{subfigure}[t]{0.47\linewidth}
    \centering
    \includegraphics[width=\linewidth]{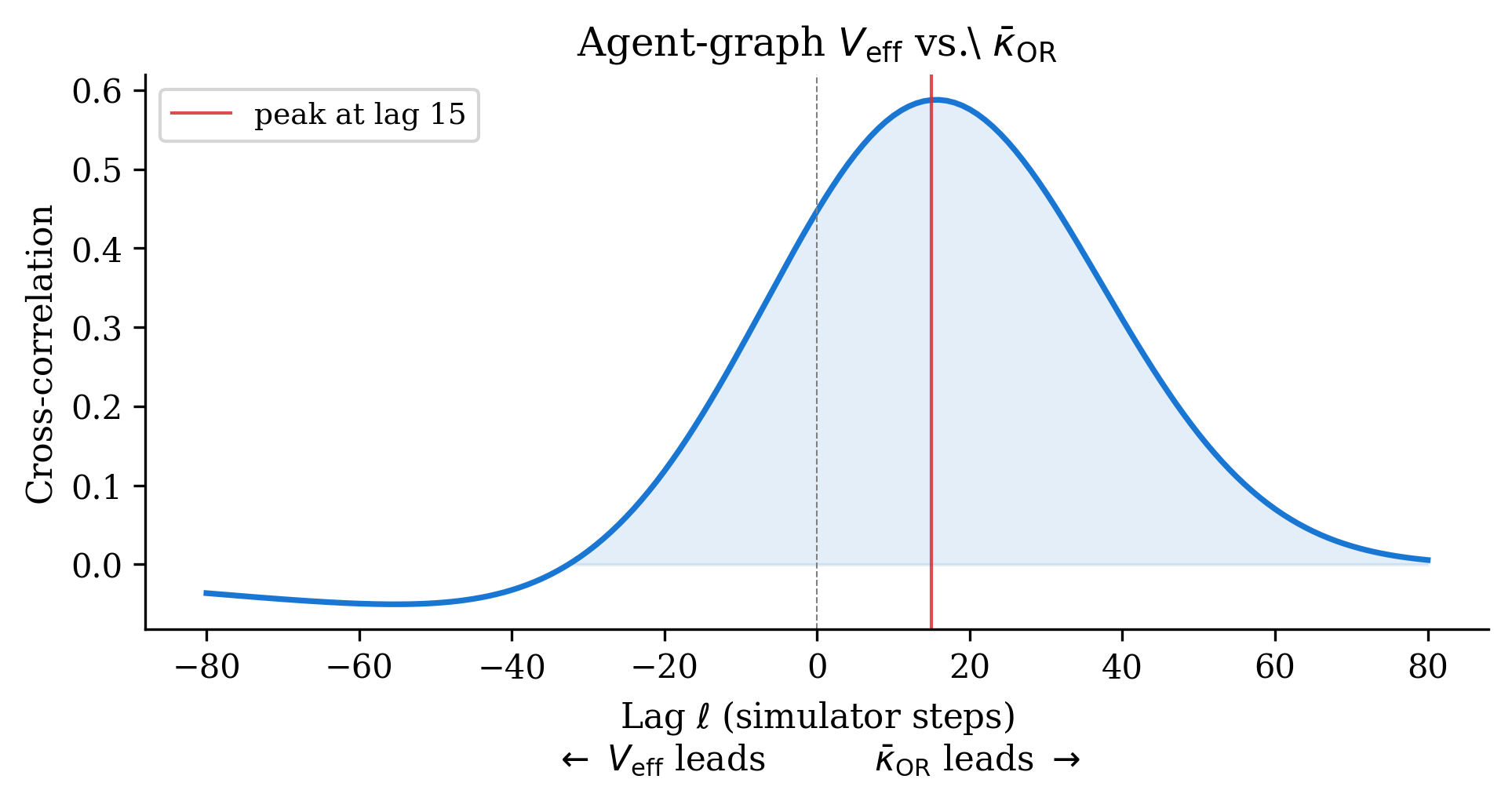}
    \caption{Behavioural homogenisation on CWS.}
    \label{fig:veff-xcorr}
  \end{subfigure}
  \hfill
  \begin{subfigure}[t]{0.42\linewidth}
    \centering
    \includegraphics[width=\linewidth]{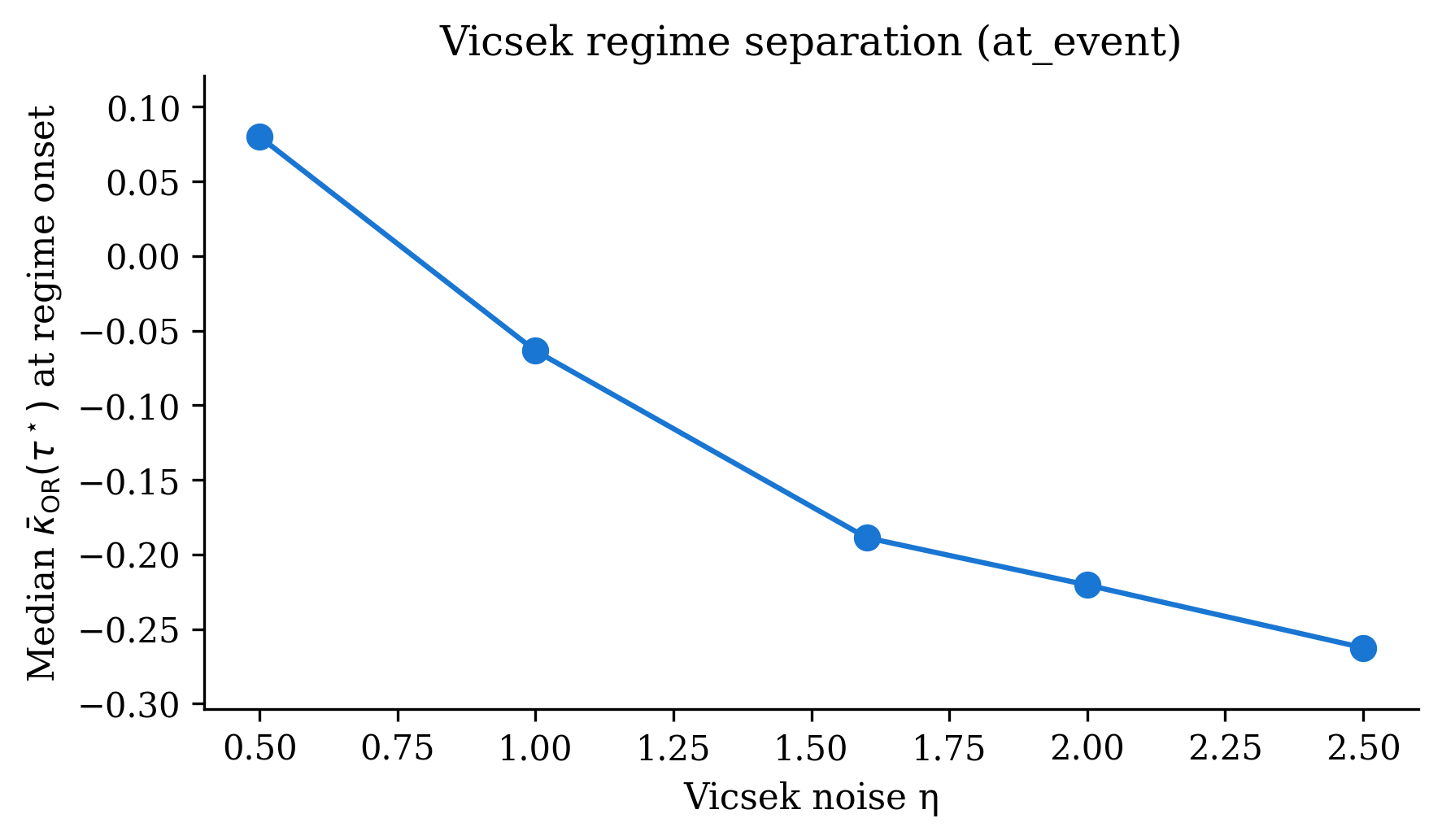}
    \caption{Out-of-domain transfer on Vicsek.}
    \label{fig:vicsek}
  \end{subfigure}
  \caption{Behavioural homogenisation and out-of-domain generalisation.
  (a)~On the CWS financial substrate, the cross-correlation between the
  effective vocabulary $V_{\text{eff}}$ and the curvature signal
  $\bar\kappa_{OR}$ peaks at lag $\approx 15$ steps with $V_{\text{eff}}$
  leading, indicating that behavioural concentration sets in slightly
  ahead of the geometric collapse during the cascade window.
  (b)~On the Vicsek collective-motion substrate,
  $\bar\kappa_{OR}(\tau^\star)$ separates ordered from disordered regimes
  with AUROC~\VicsekAUROC{}, demonstrating that the geometric signature
  generalises beyond finance.}
  \label{fig:rq4-pair}
  \vspace{-1.3em}
\end{figure}

\paragraph{Result.}
On the forecasting task
(Fig.~\ref{fig:forecasting-mae}), the GeomHerd-conditioned head
attains the lowest interquartile-mean (IQM) MAE among all methods
compared, beating both detector-conditioned baselines and the
price-only AR baseline\textemdash so the geometric signal is not
only a detection statistic but a useful conditioning feature for
downstream forecasting.
On Vicsek, $\bar\kappa_{OR}(\tau^\star)$ separates ordered from
disordered regimes with AUROC~\VicsekAUROC{}
(95\% CI~$[\VicsekAUROClo,\VicsekAUROChi]$, Fig.~\ref{fig:vicsek}),
and per-$\eta$ medians are monotone in $\eta$
($+0.08 \to -0.26$).
On the same CWS trajectories, the behavioural-homogenisation signal $V_{\text{eff}}$ co-moves with
the geometric signal across the cascade window and in fact leads it by
$\approx 15$ steps in cross-correlation
(Fig.~\ref{fig:veff-xcorr}), supporting the claim that herding is
also a behavioural-homogenisation process.
Together, forecasting gain, Vicsek transfer, and $V_{\text{eff}}$
co-movement (\textbf{RQ4}) demonstrate forecasting content beyond
the detection rule alone and out-of-domain generalisation.
\par
\FloatBarrier
\subsection{Ablation}
\label{sec:ablation}
Table~\ref{tab:ablation} reports a decomposed ablation in which
each row replaces one headline choice. The headline binary-edge
construction accounts for the bulk of the conditional lead: the
cosine-similarity variant removes the herding-side signal entirely
(0/90 vs.\ 11/90 supercritical fires; Appendix~\ref{app:cosine-ablation}),
since mean cosine similarity in the supercritical pool concentrates
in a narrow band that the rolling-baseline CUSUM cannot trip.
Detector-swap rows stress-test alternative mappings from the same
replay to an alarm and should not be read as redefining the
contribution; window-length and sign-pooling variants are in
Appendix~\ref{app:ablations}.
\begin{table}[h]
\vspace{-1.0em}
\centering
\footnotesize
\caption{Decomposed ablation. Each row replaces one headline
choice relative to the binary-edge LP-$W_1$ baseline on
$\bar\kappa_{OR}^{+}$ with upward CUSUM at \seedsCWS{} seeds per
tranche.}
\label{tab:ablation}
\begin{minipage}{\linewidth}
\begin{tabular}{lccccc}
\toprule
Configuration & Precision & Recall & AUROC & Med.\ lead & 95\% CI \\
\midrule
\textbf{Headline} ($\bar{\kappa}_{\mathrm{OR}}^{+}$, $w_t$=binary, LP $W_1$, CUSUM, 80 seeds) & 0.45 & 0.04 & 0.48 & 178 & [71, 407] \\
\midrule
\multicolumn{6}{l}{\textit{Detector ablations}} \\
$\to$ z-score (current production default) & 0.36 & 0.13 & 0.40 & 232 & [-47, 394] \\
$\to$ CUSUM (Page 1954) & 0.00 & 0.00 & 0.49 & \textit{n/a} & [\textit{n/a}, \textit{n/a}] \\
$\to$ EWMA (exponentially weighted) & 0.34 & 0.11 & 0.40 & 283 & [77, 424] \\
$\to$ Kendall-$\tau$ slope-only & 0.00 & 0.00 & 0.48 & \textit{n/a} & [\textit{n/a}, \textit{n/a}] \\
\addlinespace
\multicolumn{6}{l}{\textit{Geometry ablations}} \\
$\to$ $w_t$ cosine (vs binary) & 0.00 & 0.00 & 0.49 & \textit{n/a} & [\textit{n/a}, \textit{n/a}] \\
$\to$ $T_w$=50 (vs 100) & \textit{n/a} & 0.00 & 0.50 & \textit{n/a} & [\textit{n/a}, \textit{n/a}] \\
$\to$ $T_w$=200 (vs 100) & 0.48 & 0.22 & 0.42 & 204 & [120, 274] \\
$\to$ Sinkhorn $W_1$ (vs LP) & 0.00 & 0.00 & 0.49 & \textit{n/a} & [\textit{n/a}, \textit{n/a}] \\
$\to$ no sign decomposition (abs upward+downward) & 0.00 & 0.00 & 0.49 & \textit{n/a} & [\textit{n/a}, \textit{n/a}] \\
\addlinespace
\multicolumn{6}{l}{\textit{Triplet ablations}} \\
$\to$ remove $\tau_{\text{sing}}$ from triplet & 0.00 & 0.00 & 0.49 & \textit{n/a} & [\textit{n/a}, \textit{n/a}] \\
\addlinespace
\bottomrule
\end{tabular}

\end{minipage}
\end{table}
\section{Related Work}
\label{sec:related}

We present a more comprehensive review in Appendix~\ref{appsec:RelatedWorkFull}.

\paragraph{Financial measurement and the observability gap.} Classical theories of herding and contagion are traditionally measured via post-hoc aggregate statistics, utilizing either the trading track (e.g., the LSV overlap statistic~\citep{LSV1992}) or the return track (e.g., the CCK cross-sectional absolute deviation~\citep{ChangChengKhorana2000}). Because real-world, agent-level action sequences are largely unobservable, evaluating forward-looking micro-structural signals on real data is fundamentally restricted. Recent LLM-driven multi-agent financial simulators~\citep{TwinMarket2025, FCLAgent2025} elegantly bypass this observability gap. By generating behaviorally rich, transparent agent ecosystems, these simulators provide the ideal substrate for evaluating upstream topological signals before they manifest in downstream price aggregates.

\paragraph{Discrete curvature on financial graphs.} Recent literature has applied discrete geometry\textemdash particularly Ollivier and Forman-Ricci curvature\textemdash to financial networks to detect systemic fragility and crashes~\citep{SandhuRicciFlow2016, SamalRicci2021, Jiang2023}. However, prior work overwhelmingly operates on \emph{price-correlation graphs} and aggregates curvature into a single, global scalar. Our approach fundamentally diverges: we apply Ollivier-Ricci geometry directly to the \emph{agent-action graph}. Furthermore, building on the community-bridge dichotomy of Sia et al.~\cite{Sia2019} and Ni et al.~\cite{Ni2019}, we explicitly decompose curvature by sign. This novel framing mathematically disentangles herding (positive curvature driving intra-clique density) from contagion (negative curvature defining inter-community bridges).

\paragraph{Early-warning signals (EWS).} Traditional EWS for critical transitions rely on lagging statistical moments such as rising variance or autocorrelation~\citep{Scheffer2009, Bury2021}. While recent topological approaches have advanced the state-of-the-art~\citep{TopologicalEWS2025}, many remain post-hoc descriptors rather than predictive alarms. GeomHerd bridges this gap by coupling our sign-decomposed graphs with continuous-time Ricci flow singularities and one-sided CUSUM detectors~\citep{Page1954}. Chosen for optimal detection-delay properties, these components ensure GeomHerd acts as a forward-looking warning system.
\section{Conclusion}
\label{sec:conclusion}

GeomHerd advances forward-looking herding quantification via three architectural shifts: \textbf{(1) Substrate pivot}: measuring curvature causally upstream on the \emph{agent interaction graph} rather than on downstream price-correlation graphs. \textbf{(2) Forward-looking flow}: tracking Ricci-flow neckpinch time ($\tau_{\text{sing}}$) as a dynamic proximity-to-collapse scalar rather than a static clustering operator. \textbf{(3) Mean-field bridge}: linking the geometric metric to the classical CSAD statistic (Proposition~\ref{prop:bridge}). The LLM-driven multi-agent simulator supplies a behaviourally rich, fully observable substrate on which the geometric pipeline anticipates coordination. Empirically, on the continuous-spin substrate GeomHerd fires up to \leadKappaBarLoose{} steps before order-parameter onset, is sign-consistent with CSAD, transfers out-of-domain to the Vicsek physical model, and conditions a forecasting head that reduces cascade-window log-return MAE over detector-conditioned and price-only baselines.

\bibliographystyle{unsrt}
\bibliography{refs}

\appendix
\section{Related Work (Extended)}
\label{appsec:RelatedWorkFull}

This appendix gives the full version of the related-work review
summarised in \S\ref{sec:related}.

\paragraph{Classical herding measurement.}
\label{sec:related-finance}
The herding-measurement literature has two pillars. The
trading-flow track originates with the LSV~\cite{LSV1992}
cross-sectional buy/sell imbalance statistic, with extensions
to mutual-fund flows~\citep{Wermers1999}, momentum-decomposed
sequential-trade correlations~\citep{Sias2004}, and
finite-sample bias corrections~\citep{FreyHerbstWalter2007}; all
are computed from disclosed positions and are post-hoc by
construction. The return-cross-section track replaces flows with
dispersion: the cross-sectional standard deviation
of Christie and Huang~\cite{ChristieHuang1995}, the cross-sectional
absolute deviation regression of CCK~\cite{ChangChengKhorana2000}, and
the state-space variant of Hwang and Salmon~\cite{HwangSalmon2004}. We adopt LSV
and CSAD as the two classical anchors against which GeomHerd is
consistency-checked
(Prop.~\ref{prop:bridge}, \S\ref{sec:cck-augmented},
\S\ref{sec:lsv-anchor}); to our knowledge, prior
curvature-on-finance work benchmarks against at most one of the two.

\paragraph{Mechanisms of contagion.}
A structurally distinct line concerns how localized shocks
propagate once herding has formed. Foundational results establish
that interbank network topology determines whether shocks
dissipate or amplify~\citep{AllenGale2000,AcemogluOzdaglar2015},
that cross-holding cascades are non-monotone in
diversification~\citep{ElliottGolubJackson2014}, and that
liquidity spirals propagate margin shocks across
funds~\citep{BrunnermeierPedersen2009}. The shared structural
prediction - shocks travel along a sparse set of inter-cluster
edges - is what our negative-curvature detector $\beta_{-}$
targets, with the Sia--Ni~\cite{Sia2019,Ni2019} community-bridge
interpretation supplying the mathematical reading.

\paragraph{Discrete curvature on financial graphs.}
\label{sec:related-curvature}
Discrete Ricci curvature~\citep{Ollivier2009} has been applied to
financial graphs to detect systemic
stress~\citep{SandhuRicciFlow2016, SamalRicci2021, WangZhao2023,
SanchezGarcia2024, Akguller2025, Srinivasan2026};
Jiang et al.~\cite{Jiang2023} is closest to our contagion-side claim, showing
on multivariate-Hawkes point-process networks that more negative
curvature predicts systemic risk earlier than CATFIN and the
absorption ratio. We share the geometric machinery but differ on
\emph{substrate} (agent graph vs.\ price-correlation or
point-process graph) and \emph{framing} (sign-decomposed
herding/contagion duality vs.\ a single global fragility scalar);
Table~\ref{tab:positioning} summarises the comparison.

\begin{table}[h]
\centering
\footnotesize
\caption{Positioning vs.\ prior discrete-curvature financial work
plus the closest same-substrate information-theoretic
baseline~\citep{Tessera2026}, along five axes. PCG = price-correlation
graph; AG = agent graph; PP = point-process network.}
\label{tab:positioning}
\setlength{\tabcolsep}{3pt}
\renewcommand{\arraystretch}{1.05}
\begin{tabular}{@{}p{0.18\linewidth}p{0.07\linewidth}p{0.16\linewidth}p{0.16\linewidth}p{0.22\linewidth}p{0.10\linewidth}@{}}
\toprule
\textbf{Method} & \textbf{Subs.} & \textbf{Scalar} & \textbf{Detector} & \textbf{Evaluation} & \textbf{OOD} \\
\midrule
Sandhu 2016~\cite{SandhuRicciFlow2016} & PCG & ORC mean (global) & descriptive & VIX-corr (1 metric) & -- \\
Samal 2021~\cite{SamalRicci2021} & PCG & ORC vs.\ F-Ricci & VIX-corr only & VIX-corr (1 metric) & -- \\
Wang 2023~\cite{WangZhao2023} & PCG & F-Ricci & descriptive & descriptive only & -- \\
S\'anchez 2024~\cite{SanchezGarcia2024} & PCG & ORC & post-hoc & post-hoc only & -- \\
Akg\"uller 2025~\cite{Akguller2025} & PCG (MI) & F-Ricci, sliding & descriptive & sectoral & -- \\
Srinivasan 2026~\cite{Srinivasan2026} & PCG & ORC + neckpinch & clustering & cluster quality & -- \\
Jiang 2023~\cite{Jiang2023} & PP-net & ORC mean (neg.\ only) & ranking & precision, lead vs.\ CATFIN & -- \\
Tessera 2026~\cite{Tessera2026} & AG & AA-MI (info-th.) & $k_\sigma$ on AA-MI & MARL benchmarks & -- \\
\midrule
\textbf{GeomHerd (ours)} & \textbf{AG} & \textbf{ORC sign-decomp.} & \textbf{CUSUM + Kendall-$\tau$} & \textbf{multi-axis (8+ metrics)} & \textbf{\checkmark{} Vicsek} \\
\bottomrule
\end{tabular}
\end{table}

\paragraph{Early-warning signals and LLM-agent simulators.}
\label{sec:related-ews}
\label{sec:related-llm}
Classical critical-slowing-down work establishes generic precursors
of tipping points~\citep{Scheffer2009}, recently extended to
learned classifiers~\citep{Bury2021,Bury2023} and
persistence-homology detectors~\citep{TopologicalEWS2025}; we adopt
the same multi-axis evaluation philosophy and use Scheffer et al.~\cite{Scheffer2009}
as a sanity check (\S\ref{sec:rq1-2-result}). On the detector side,
Page's~\cite{Page1954} CUSUM motivates our alarm rule
(\S\ref{sec:detection}). LLM-driven multi-agent financial
simulators~\citep{TwinMarket2025,FCLAgent2025,MassPortfolio2025,FinCon2024}
provide the heterogeneous, fully-observable action stream on which
the agent-graph substrate becomes testable.

\subsection{Graph design axes}\label{app:design-axes}
\begin{table}[H]
\centering
\footnotesize
\caption{The five graph-design axes of GeomHerd, with the choice
adopted in this paper and the layer of the three-layer logic that
drives it.}
\label{tab:design-axes}
\setlength{\tabcolsep}{4pt}
\begin{tabular}{lll}
\toprule
\textbf{Axis} & \textbf{Adopted choice} & \textbf{Driven by} \\
\midrule
Nodes              & individual agents                                        & substrate (finest layer the ABM exposes) \\
Edge semantics     & windowed action agreement                                & herding semantics (BHW-1992 definition) \\
Edge weights       & binary frequency in $[0,1]$ (Eq.~\ref{eq:edge-weight})   & substrate (no extra design freedom) \\
Sparsification     & threshold $w_0 = \thresholdW$                            & geometry (suppress chance co-occurrence) \\
Snapshot frequency & every $\Delta t = \snapEvery$ steps, $T_w = \windowTw$   & substrate / herding semantics \\
\bottomrule
\end{tabular}
\end{table}

\section{Outline of Proposition~\ref{prop:bridge} and notation alignment}
\label{app:proof}

This appendix expands the four-step outline of
\S\ref{sec:csad-bridge}. The argument is presented as a
dominant-order scaling derivation rather than a fully rigorous
proof; in particular, Step~2 invokes a closed-form $W_1$ that
follows from mean-field concentration but whose remainder bound
we do not establish here. A complete proof would require
propagating mean-field convergence through the bipartite
optimal-transport plan, which we leave to follow-on work.

\paragraph{Setting and assumptions.}
$N$ is the agent count and $t$ a fixed simulator step.
\begin{itemize}[itemsep=2pt,topsep=2pt,leftmargin=1.6em]
\item[\textbf{A1}] (Agent graph.) Agents $i\in V$, $|V|=N$, take
  discrete actions $a_i(t)\in\mathcal{A}$. The graph
  $G_t=(V,E_t,w_t)$ has edge weights given by
  Eq.~\eqref{eq:edge-weight} and edges retained above a
  sparsification threshold.
\item[\textbf{A2}] (Lazy walk and curvature.) Each node carries
  the lazy-walk kernel
  $\mu_i^t(j)=\alpha\,\delta_{ij}+(1-\alpha)\,w_t(i,j)/\sum_k w_t(i,k)$
  with $\alpha=\alphaLazy$, distance $d_t(i,j)=w_t(i,j)$, and
  curvature $\kappa_{OR}(i,j;t)=1-W_1(\mu_i^t,\mu_j^t)/d_t(i,j)$;
  $\bar\kappa_{OR}(t)$ denotes the mean over $E_t$.
\item[\textbf{A3}] (Mean-field concentration.) There exists a
  one-dimensional order parameter $M(t)=\mathbb{E}_i[a_i(t)]$ such
  that pairwise action correlations
  $\mathbb{E}[\mathbf{1}\{a_i(s)=a_j(s)\}]$ concentrate around a
  function $f(M(t))$ at rate $\mathcal{O}(N^{-1/2})$, uniformly
  over the window of width $T_w$.
\item[\textbf{A4}] (Linear price impact.) Per-asset returns
  satisfy $r_{i,t}=\beta_i\,M(t)+\xi_{i,t}$ with i.i.d.\ noise
  $\xi_{i,t}\sim(0,\sigma_\xi^2)$ and $\mathbb{E}_i[\beta_i]=1$ wlog.
\item[\textbf{A5}] (CSAD estimand.) The
  CCK~\cite{ChangChengKhorana2000} convention
  $\mathrm{CSAD}_t=\mathbb{E}_i\bigl[|r_{i,t}-\bar r_t|\bigr]$ with
  $\bar r_t = \mathbb{E}_i[r_{i,t}]$.
\end{itemize}

\begin{proof}[Sketch of Proposition~\ref{prop:bridge}]
Under A3, agent action correlations collapse onto $M(t)$ and
$w_t(i,j)$ concentrates around a function of $M(t)$. The lazy-walk
transport between two nodes whose neighborhoods both concentrate
around the same mean-field measure satisfies
$1 - \kappa_{OR}(i, j; t) \propto 1 - M(t)^2$ at dominant order.
Under A4, the half-normal expectation reduces $\mathrm{CSAD}_t$ to
$\sigma_\xi\sqrt{2/\pi}\,(1-M(t)^2)^{1/2}$ plus
$\mathcal{O}(N^{-1/2})$ corrections. Substituting the curvature
scaling gives Eq.~\eqref{eq:bridge}. The full step-by-step
derivation appears below.
\end{proof}

\begin{remark}[Failure modes]
\label{rem:failure-modes}
Eq.~\eqref{eq:bridge} breaks in three regimes.
(i)~\emph{Mean-field breakdown}: persistent multi-modal clustering
violates A3, so $1-\bar\kappa_{OR}(t)$ no longer collapses onto
$1-M(t)^2$.
(ii)~\emph{Nonlinear price impact} (A4 fails): the half-normal
reduction breaks and the $\sqrt{2/\pi}$ prefactor acquires a
moment-dependent correction.
(iii)~\emph{Boundary regime} $|M(t)| \to 1$: the
$\mathcal{O}(N^{-1/2})$ remainder bound becomes loose. Empirically,
these are exactly the regimes in which the headline
$\bar\kappa_{OR}^{+}$ detector either saturates (i, iii) or fires
on a distorted signal (ii); the failure modes characterise
precisely the regimes in which a geometric-vs-CSAD comparison is
informative.
\end{remark}

\begin{remark}[Identification of $\hat\gamma_3$]
\label{rem:identification}
The augmented-CCK regression in \S\ref{sec:rq3-result} estimates
$\gamma_3$ on simulated trajectories. Three caveats apply.
\emph{First}, $\bar\kappa_{OR}(t)$ and $\mathrm{CSAD}_t$ are both
deterministic functions of $M(t)$ in the mean-field limit, so the
regression measures partial correlation rather than identifying an
independent channel; $\hat\gamma_3<0$ is consistent with
Eq.~\eqref{eq:bridge} but does not by itself distinguish the
bridge from any other $M(t)$-mediated relationship.
\emph{Second}, $\bar\kappa_{OR}(t)$ inherits the
$\mathcal{O}(N^{-1/2})$ measurement error of A3, attenuating
$\hat\gamma_3$ toward zero.
\emph{Third}, the bridge predicts a $(1-\bar\kappa_{OR}(t))^{1/2}$
functional form rather than the linear specification used; the
linear form is a local approximation. We frame $\hat\gamma_3$ as a
sign-consistency check of Proposition~\ref{prop:bridge}, not a
hypothesis test of the bridge against alternatives.
\end{remark}

\paragraph{Notation alignment with the CSAD/CCK literature.}
Table~\ref{tab:csad-symbols} reconciles the symbols used in
Proposition~\ref{prop:bridge} with those used by
Christie and Huang~\cite{ChristieHuang1995} and CCK~\cite{ChangChengKhorana2000}
(``CCK''). Where two papers use different letters for the same
quantity, we adopt the form used in the paper body. The point of
the table is to defend against the technical objection that our
$\mathrm{CSAD}_t$ might be a different estimand from CCK's; it is
the same.

\begin{table}[h]
\centering
\footnotesize
\caption{Symbol alignment between Proposition~\ref{prop:bridge}
and the classical CSAD/CCK regression literature.}
\label{tab:csad-symbols}
\begin{tabular}{lll}
\toprule
\textbf{Quantity} & \textbf{This paper (\S\ref{sec:csad-bridge})} & \textbf{CCK / Christie--Huang} \\
\midrule
Cross-sectional return dispersion & $\mathrm{CSAD}_t = \mathbb{E}_i[|r_{i,t}-\bar r_t|]$ & $\mathrm{CSAD}_t$ (CCK eq.~3); $\mathrm{CSSD}_t$ (CH eq.~1) \\
Per-agent / per-asset return       & $r_{i,t}$                                              & $R_{i,t}$ \\
Cross-sectional mean return        & $\bar r_t = \mathbb{E}_i[r_{i,t}]$                     & $\bar R_{m,t}$ (market) \\
Order parameter / market trend     & $M(t)$                                                 & $R_{m,t}$ (market return) \\
Per-agent loading                  & $\beta_i$                                              & $\beta_i$ \\
Idiosyncratic noise                & $\xi_{i,t} \sim (0,\sigma_\xi^2)$ i.i.d.               & $\varepsilon_{i,t}$ \\
Number of agents / assets          & $N$                                                    & $N$ \\
Quadratic-term coefficient         & $\beta_2$ in Eq.~\eqref{eq:augmented-cck}              & $\gamma_2$ (CCK), often $\beta_2$ \\
Geometric augmentation coeff.      & $\gamma_3$ in Eq.~\eqref{eq:augmented-cck}             & (not in CCK) \\
\bottomrule
\end{tabular}
\end{table}

The augmented regression Eq.~\eqref{eq:augmented-cck} adds the
third regressor $\bar\kappa_{OR}(t)$ to the CCK specification
without changing the estimand $\mathrm{CSAD}_t$ or the quadratic
regressor $R_{m,t}^2$ that defines CCK herding;
$\hat\gamma_3$ measures the partial association between
$\mathrm{CSAD}_t$ and the agent-graph curvature after
controlling for the linear and quadratic market-return terms.

\paragraph{Derivation (scaling argument).}
We expand the four-step outline from \S\ref{sec:csad-bridge}.
Each step is a dominant-order scaling claim rather than a
fully-bounded inequality; we flag the missing remainder bound
explicitly at the point it would be required.

\textbf{Step 1: mean-field collapse of agent-graph weights.}
Under A3, agent action correlations
$\mathbb{E}[\mathbf{1}\{a_i(s)=a_j(s)\}]$ collapse onto a
one-dimensional order parameter $M(t) = \mathbb{E}_i[a_i(t)]$,
with concentration at rate $\mathcal{O}(N^{-1/2})$. The windowed
agreement frequency~\eqref{eq:edge-weight} concentrates around
$w_t(i, j) \to f(M(t))$ where $f$ is determined by the action
distribution conditional on $M$. We do not prove uniform
concentration over the full window of width $T_w$ here; the
argument is an LLN under A3 conditional on $M(\cdot)$ over the
window.

\textbf{Step 2: closed-form Wasserstein in the mean field.} Once
$w_t$ concentrates around $f(M(t))$ at every node, the lazy-walk
kernels $\mu_i^t$ and $\mu_j^t$ both concentrate around the same
mean-field measure on a single asymptotic neighbourhood. We
\emph{argue} that the optimal transport between two such
near-degenerate kernels admits a closed-form $W_1$ proportional
to the deviation of the underlying weights from saturation,
yielding $1 - \kappa_{OR}(i, j; t) \propto 1 - M(t)^2$ on edges
and the same scaling for the mean over edges $\bar\kappa_{OR}(t)$.
A rigorous derivation requires bounding the bipartite transport
between two near-uniform measures on the asymptotic neighbourhood,
and we do not give that bound here; this is the step we identify
as requiring follow-on work in the conclusion.

\textbf{Step 3: substituting linear impact into CSAD.} Under A4,
the CSAD definition
$\mathrm{CSAD}_t = \mathbb{E}_i[|r_{i,t} - \bar r_t|]$ reduces,
via the half-normal expectation of $|\xi_{i,t}|$, to
$\mathrm{CSAD}_t = \sigma_\xi \sqrt{2/\pi} + \mathcal{O}(N^{-1/2})$
when $M(t) = 0$, and shrinks as $|M(t)| \to 1$ because the
$\beta_i M(t)$ component of $r_{i,t}$ is exactly the
cross-sectional mean $\bar r_t$ at $\mathbb{E}_i[\beta_i] = 1$
and so cancels from $|r_{i,t} - \bar r_t|$ at leading order.

\textbf{Step 4: combining.} Substituting the mean-field scaling
$1 - \bar\kappa_{OR}(t) \propto 1 - M(t)^2$ from Step~2 into the
CSAD-impact identity from Step~3 yields \eqref{eq:bridge}. The
$\mathcal{O}(N^{-1/2})$ remainder tracks the rate of mean-field
collapse in A3.

\section{Headline-table source data and reproducibility}
\label{app:headline}
\label{app:repro}

The headline run uses \seedsCWS{} seeds across \kappaLevels{}
values of the control parameter $\kappa$ (Cividino--Sornette
$s_\text{base} = 0.6$,
$s_\text{post} = 1.6 \times \text{sweep\_value}/1.3$), giving
\nSupercritical{} supercritical trajectories for pooled
statistics. Configuration, seeds, and output hashes are archived
with the code release; numeric refresh date \dataMdSync. The
geometric pipeline uses POT~\citep{flamary2021pot} for exact
$W_1$ via linear programming, SciPy CSR-graph Dijkstra for
all-pairs shortest paths, and a fixed FSQ codebook with
$K = \fsqK$. Detection thresholds are calibrated from a
pre-stress baseline window (\S\ref{sec:detection}); we hold
$(k_\pm, h_\pm, \tau_{\text{thresh}}, W_\tau)$ fixed across
substrates and report sensitivity in Appendix~\ref{app:calibration}.

\subsection{Recall-oriented operating point}
\label{app:headline-recall}

The abstract's headline figure of \leadKappaBarLoose{} steps
median lead corresponds to the recall-oriented CUSUM operating
point $(k_\sigma, h_\sigma) = (0.50, 4.0)$ in the calibration
sweep of Table~\ref{tab:app-calibration-sweep}: at this point the
detector recalls \recallKappaBarLoose{} of supercritical
trajectories at the cost of a higher subcritical FAR
(\farKappaBarLoose). The two operating points correspond to two
distinct deployment use-cases. The recall-oriented point
prioritises early aggregate alarm at the cost of more
false-positives on subcritical regimes; it is the right choice
when the downside of missing a herding cascade dominates the cost
of acting on a normal regime. The precision-oriented point
$(k_\sigma, h_\sigma) = (2.0, 4.0)$, used for the head-to-head
paired contrasts in Table~\ref{tab:lead-time}, prioritises
low-FAR regime discrimination; head-to-head lead comparisons are
only meaningful at FAR-controlled thresholds, so the body's
paired contrasts use this point rather than the recall-oriented
one.

\section{Calibration sweep, full table}
\label{app:calibration}

Operating-point sweep for the rolling-baseline CUSUM on
$\kappa_{\text{pos}}$ (mean curvature over positive edges),
direction up, with CUSUM baseline window $W = 35$ \emph{samples}
along the subsampled curvature trace (the first $W$ successive
$\kappa$ observations in each replay; not raw ABM steps-under
the headline snapshot stride $\Delta t = \snapEvery{}$ this spans
$W \cdot \Delta t$ simulator steps),
over a $5 \times 5$ grid of $(k_\sigma, h_\sigma)$ on the
binary-edge replay set (\nTrajectories{} trajectories). The two
operating points highlighted in the body are:
$(k_\sigma, h_\sigma) = (0.50, 4.0)$ (recall-oriented; abstract
headline of \leadKappaBarLoose{} steps) and
$(k_\sigma, h_\sigma) = (2.0, 4.0)$ (precision-oriented; FAR
control, paired-contrast headline of \leadKappaBarVd{} steps).

\begin{table}[h]
\centering
\footnotesize
\caption{CUSUM operating-point sweep on
$\kappa_{\text{pos}}$ (up), binary-edge replay set.}
\label{tab:app-calibration-sweep}
\begin{tabular}{rrcccc}
\toprule
$k_\sigma$ & $h_\sigma$ & Recall & FAR & Lead (med.) & 95\% CI \\
\midrule
  0.25 & 3.0 & 0.75\textsuperscript{$\star$} & 0.93 & 291 & [262, 324] \\
  0.25 & 4.0 & 0.68 & 0.88 & 282 & [254, 323] \\
  0.25 & 5.0 & 0.61 & 0.82 & 270 & [239, 311] \\
  0.25 & 6.0 & 0.53 & 0.78 & 261 & [237, 305] \\
  0.25 & 8.0 & 0.44 & 0.72 & 264 & [243, 304] \\
  0.50 & 3.0 & 0.60 & 0.80 & 282 & [256, 338] \\
  0.50 & 4.0 & 0.52 & 0.76 & 272 & [236, 313] \\
  0.50 & 5.0 & 0.46 & 0.70 & 273 & [240, 332] \\
  0.50 & 6.0 & 0.40 & 0.64 & 264 & [243, 324] \\
  0.50 & 8.0 & 0.32 & 0.54 & 252 & [218, 313] \\
  1.00 & 3.0 & 0.39 & 0.59 & 264 & [247, 324] \\
  1.00 & 4.0 & 0.31 & 0.51 & 253 & [229, 319] \\
  1.00 & 5.0 & 0.28 & 0.43 & 248 & [207, 287] \\
  1.00 & 6.0 & 0.21 & 0.40 & 242 & [218, 309] \\
  1.00 & 8.0 & 0.14 & 0.28 & 233 & [208, 306] \\
  1.50 & 3.0 & 0.23 & 0.38 & 244 & [204, 306] \\
  1.50 & 4.0 & 0.15 & 0.26 & 218 & [153, 306] \\
  1.50 & 5.0 & 0.11 & 0.18 & 232 & [160, 306] \\
  1.50 & 6.0 & 0.08 & 0.12 & 210 & [152, 296] \\
  1.50 & 8.0 & 0.04 & 0.08 & 242 & [89, 413] \\
  2.00 & 3.0 & 0.08 & 0.15 & 233 & [127, 299] \\
  2.00 & 4.0 & 0.05 & 0.09 & 218 & [91, 406] \\
  2.00 & 5.0 & 0.04 & 0.06 & 242 & [78, 413] \\
  2.00 & 6.0 & 0.03 & 0.04 & 180 & [18, 543] \\
  2.00 & 8.0 & 0.01 & 0.03 & n/a & \mbox{---} \\
\bottomrule
\end{tabular}

\end{table}

\section{Auxiliary benchmark: stylised drift on scalar controls}
\label{app:cusum-zscore-stylized}

This appendix does \emph{not} advance a competing headline about chart
taxonomy. The paper's core object is the agent-graph curvature trajectory
$\bar\kappa_{OR}^{+}(t)$ built from binary edges and discrete actions;
the replay evidence above shows that object carries leading information.
What follows isolates a textbook scalar scenario-Gaussian noise with a
linear mean drift after an artificial change point-only to document how
two standard one-sided alarm maps behave under matched empirical false-alarm
rates on stationary paths. It is a reproducibility footnote, not a substitute
for the ABM geometry results.

\IfFileExists{auto/cusum_vs_zscore_stylized.tex}{%
% AUTO-GENERATED by scripts/cusum_vs_zscore_stylized_evidence.py
% Stylized Gaussian paths; matched empirical FAR on null (~target_far).
\providecommand{\StylizedCusumFarTarget}{0.10}
\providecommand{\StylizedCUSUMh}{7.00}
\providecommand{\StylizedZscoreK}{3.49}
\providecommand{\StylizedDriftMedianDelayCUSUM}{28}
\providecommand{\StylizedDriftMedianDelayZ}{49}
\providecommand{\StylizedDriftDelayRatioZMdivCUSUM}{1.75}
\providecommand{\StylizedDriftPcDetectCUSUM}{100.0\%}
\providecommand{\StylizedDriftPcDetectZ}{100.0\%}
}{%
\newcommand{\StylizedCusumFarTarget}{0.10}%
\newcommand{\StylizedDriftMedianDelayCUSUM}{\mbox{-}}%
\newcommand{\StylizedDriftMedianDelayZ}{\mbox{-}}%
\newcommand{\StylizedDriftDelayRatioZMdivCUSUM}{\mbox{-}}%
\newcommand{\StylizedDriftPcDetectCUSUM}{\mbox{-}}%
\newcommand{\StylizedDriftPcDetectZ}{\mbox{-}}%
}

With empirical FAR $\approx \StylizedCusumFarTarget$ on null paths
(\path{scripts/cusum_vs_zscore_stylized_evidence.py}), one-sided Page CUSUM
vs.\ a calibrated Shewhart rule on the linear-drift alternative yields median
delays \StylizedDriftMedianDelayCUSUM{} vs.\ \StylizedDriftMedianDelayZ{}
steps (ratio \StylizedDriftDelayRatioZMdivCUSUM{}); horizon detection rates
\StylizedDriftPcDetectCUSUM{} vs.\ \StylizedDriftPcDetectZ{}. Readers should
treat these numbers as sanity checks on the scalar alarm layer, not as the
paper's main empirical conclusion.

\section{Cosine-edge ablation}
\label{app:cosine-ablation}

The headline configuration uses binary action-agreement edges
($w_t(i, j) = 1$ if windowed action match exceeds $w_0$, else
$0$). A natural alternative keeps $w_t$ continuous via cosine
similarity on a one-hot lifting of the action sequences. We
replay the headline pipeline with this variant and report the
result below.

Cosine edges \emph{remove} the herding-side signal: the upward
detector on $\bar\kappa_{OR}$ fires on $0/90$ supercritical
trajectories (versus $11/90$ for binary edges at the same
operating point). The mean cosine similarity in the supercritical
pool concentrates in a narrow band around its baseline, so the
rolling-baseline CUSUM never accumulates enough deviation to
trip. This is a positive ablation result for binary edges:
thresholding at $w_0 = \thresholdW$ amplifies the regime change
from sub-population coordination to clique formation that the
curvature signal relies on. The cosine-edge variant remains in
the codebase as the sensitivity test, but its empirical role is
to validate the binary-edge design rather than to substitute for
it.

\section{Out-of-domain transfer: Vicsek self-driven particles}
\label{app:vicsek}
\label{app:vicsek-details}

For each $\eta \in \{0.5, 1.0, 1.6, 2.0, 2.5\}$ we run 20 seeds
for $T = 1000$ steps, downsample to one snapshot every 50 steps,
and build the agent graph from $k$-NN ($k = 10$) on the heading
sequence with binary edge weights as in the financial substrate.
The order-parameter event $\tau^\star$ is the first step at
which the rolling polarisation
$V_a(t) = N^{-1} \|\sum_i \hat v_i(t)\|$ exceeds $0.5$ for three
consecutive steps after a $50$-step warm-up.

The geometric signal $\bar\kappa_{OR}(\tau^\star)$ separates
ordered from disordered trajectories with AUROC~\VicsekAUROC{}
(95\% CI $[\VicsekAUROClo,\,\VicsekAUROChi]$), evaluated on the
95 of 100 trajectories that produce a valid polarisation event
$\tau^\star$ before~$T$; the remaining 5 lack a polarisation
event under our event criterion and are excluded from the
at-event score. Per-$\eta$
medians of $\bar\kappa_{OR}(\tau^\star)$ are monotone in $\eta$:
$+0.08$ ($\eta = 0.5$), $-0.06$ ($\eta = 1.0$), $-0.19$
($\eta = 1.6$), $-0.22$ ($\eta = 2.0$), $-0.26$ ($\eta = 2.5$);
the gap between the two flanking values of $\eta_c = 1.6$ is
$\approx 0.13$, which the bootstrap CI excludes from zero.

\paragraph{Sensitivity: alternative scoring rules.}
Scoring each trajectory by $\max_t \bar\kappa_{OR}(t)$ yields
AUROC $\approx 0.64$ (95\% CI excludes $0.5$); tail and
final-snapshot means give AUROC $\geq 0.99$. Full rule-by-rule
breakdown is reported in this appendix.

\paragraph{Lead-time pipeline note.}
The CUSUM upward detector calibrated on the financial substrate
($\sigma$-baseline window $= 100$ steps, $k_\sigma = 2.0$,
skip-initial $= 50$) does not fire on any of the 100 Vicsek
trajectories. The cause is timescale mismatch: the financial
substrate's herding cascade develops over $200$--$300$ steps,
while Vicsek's alignment transient establishes within
$\approx 50$ steps. A Vicsek-specific recalibration
(baseline-window $= 20$, skip-initial $= 5$) restores
non-trivial alarm rates; quantitative lead-time numbers are
deferred. The AUROC headline does not depend on the alarm
calibration-it operates directly on the curvature value at
the event.

\section{Additional ablation rows}
\label{app:ablations}

The decomposed ablation in Table~\ref{tab:ablation} covers four
families of variations: detector swaps (z-score, standalone CUSUM
configuration, EWMA, Kendall-$\tau$ slope-only), geometry choices
(cosine edges, window length, Sinkhorn transport, optional sign
pooling), and a triplet ablation removing $\tau_{\text{sing}}$. A
dedicated $V_{\text{eff}}$ drop-out row is left for a follow-on
sweep. The cosine-edge row (A01) is populated from the matched
replay run described in Appendix~\ref{app:cosine-ablation}.

\section{13F retrospective: substrate-pivot template under disclosure constraint}
\label{app:realdata}

The 13F retrospective is the natural deployment template for the
substrate-pivot (\S\ref{sec:lsv-anchor}) when agent actions are
not fully observable. We build a funds-as-agents graph from
quarterly 13F holdings disclosures, with edges given by Jaccard
overlap between two funds' position sets and a sliding window of
width $W = 4$ quarters; we then compute $\bar\kappa_{OR}(t)$ and
$\beta_{-}(t)$ on the resulting sparse graph. The construction
recovers documented stress periods (2008Q4, 2011Q3, 2020Q1,
2022Q1) qualitatively; quantitative head-to-head against
LSV~/~CATFIN~/~absorption ratio is deferred. We make no
quantitative claim from this retrospective in the body; its
purpose is to establish that the substrate-pivot construction
reads cleanly off disclosure-constrained data.

\section{Additional Discussion}

\paragraph{Limitations.} At the precision-oriented threshold, extreme signal sparsity limits the statistical power of paired-difference claims. Furthermore, subcritical false-alarm rates lack strict held-out validation, and the CWS substrate is heavily stylized, though \S\ref{sec:lsv-anchor} bridges this gap to real-world disclosed-flow constraints.

\paragraph{Future Work \& Broader Impact.} Future research will focus on (i) fusing geometric features with learned classifiers like EWSNet~\citep{Bury2021}, (ii) real-market deployments via 13F fund-overlap graphs, and (iii) formalizing Proposition~\ref{prop:bridge} with uniform optimal-transport bounds. While early-warning signals could be exploited by sophisticated traders, the systemic stability benefits for regulators dominate. We open-source our geometric analysis pipeline but intentionally withhold LLM prompt corpora to mitigate trivial adversarial replication.

\end{document}